\documentclass[journal]{IEEEtran}
\usepackage[pdftex]{graphicx}
\graphicspath{{../pdf/}{../jpeg/}}
\DeclareGraphicsExtensions{.pdf,.jpeg,.png}
\usepackage[cmex10]{amsmath}
\usepackage{url}
\usepackage[table,xcdraw]{xcolor}
\usepackage{graphicx,color,xspace}
\usepackage{graphicx}
\usepackage{amssymb,color,amsfonts,mathrsfs,balance,mathtools,xfrac}
\allowdisplaybreaks
\usepackage{subcaption}
\usepackage{cite}
\usepackage{siunitx}
\usepackage{blindtext}
\usepackage{tcolorbox}
\usepackage{xcolor}
\usepackage{lipsum}
\usepackage[font={small}]{caption}
\usepackage{algorithm}
\usepackage{algorithmic}
\usepackage{array} 
\usepackage{makecell}
\usepackage{tabularx}

\newtheorem{proof}{Proof}
\newtheorem{remark}{Remark}
\newtheorem{proposition}{Proposition}

\newcommand{\vc}[3]{\overset{#2}{\underset{#3}{#1}}}

\newcommand{\abc}{\text{AmBC}\xspace} 
\usepackage{soul}
\sethlcolor{yellow}

\begin{document}
\title{Enhancing AmBC Systems with Deep Learning for Joint Channel Estimation and Signal Detection}

\author{S. Zargari, A. Hakimi, C. Tellambura, \textit{Fellow, IEEE}, and A. Maaref, \IEEEmembership{Member, IEEE}
\thanks{This work was supported in part by Huawei Technologies Canada Company, Ltd.}
\thanks{Shayan Zargari, Azar Hakimi, and Chintha Tellambura are with the Department of Electrical and Computer Engineering, University of Alberta, Edmonton, AB T6G 1H9, Canada (e-mail: zargari@ualberta.ca; hakimina@ualberta.ca; ct4@ualberta.ca).}

\thanks{Amine Maaref is with the Ottawa Wireless Advanced System Competency Centre, Huawei Canada, Ottawa, ON K2K 3J1, Canada (e-mail: Amine.Maaref@huawei.com).} \vspace{-5mm}}

\maketitle

\begin{abstract}
The era of ubiquitous, affordable wireless connectivity has opened doors to countless practical applications. In this context, ambient backscatter communication (AmBC) stands out, utilizing passive tags to establish connections with readers by harnessing reflected ambient radio frequency (RF) signals. However, conventional data detectors face limitations due to their inadequate knowledge of channel and RF-source parameters. To address this challenge, we propose an innovative approach using a deep neural network (DNN) for channel state estimation (CSI) and signal detection within AmBC systems. Unlike traditional methods that separate CSI estimation and data detection, our approach leverages a DNN to implicitly estimate CSI and simultaneously detect data. The DNN model, trained offline using simulated data derived from channel statistics, excels in online data recovery, ensuring robust performance in practical scenarios. Comprehensive evaluations validate the superiority of our proposed DNN method over traditional detectors, particularly in terms of bit error rate (BER). In high signal-to-noise ratio (SNR) conditions, our method exhibits an impressive approximately 20\% improvement in BER performance compared to the maximum likelihood (ML) approach. These results underscore the effectiveness of our developed approach for AmBC channel estimation and signal detection. In summary, our method outperforms traditional detectors, bolstering the reliability and efficiency of AmBC systems, even in challenging channel conditions.
\end{abstract}

\begin{IEEEkeywords}
Ambient backscatter communication, channel estimation, signal detection, and deep neural learning. 
\end{IEEEkeywords}

\section{Introduction}

The exponential growth of connected devices in the sixth generation (6G) wireless network poses significant challenges in terms of spectrum and energy usage \cite{Dinh_Nguyen}. One crucial application driving the interconnection of numerous devices is the Internet of Things (IoT) \cite{3gpp-tsg-95e, 3gpp-ran-rel-18}. However, devices powered by batteries with limited storage face a critical obstacle of frequent recharging, which hampers the energy and cost efficiency of 6G \cite{Jiang2018,Fatemeh_Rezaei}. To tackle this challenge, ambient backscatter communication (AmBC) has emerged as a highly promising solution, attracting significant attention in the literature \cite{Nguyen_Van,Rezaei2023,Galappaththige2022,Azar_Hakimi_1}.

In AmBC, a low-power and low-cost device called a tag utilizes ambient radio frequency (RF) signals, such as those from wireless fidelity (Wi-Fi) or television (TV), as its carrier signal instead of generating RF signals on its own. This approach eliminates the need for power-hungry active components like oscillators, mixers, and amplifiers, resulting in power consumption of only a few tens of $\mu$-watts, which is significantly lower compared to active mobile devices consuming 1000 times more power \cite{Jiang2018}. These maintenance-free tags can collect and backscatter sensed data under a wide range of environmental conditions, spanning from extreme scenarios like high-pressure or toxic environments to moderate conditions in farmlands \cite{3gpp-tsg-95e, 3gpp-ran-rel-18}. They find applications in various industries, such as industrial automation or smart logistics, where they can efficiently track millions of parcels \cite{3gpp-tr-22.840}. Another important use case for these tags is in smart agriculture, encompassing applications ranging from soil monitoring to livestock tracking. The tags utilize load modulation to alter the phase, amplitude, or frequency of the RF signal, enabling the modulation of data over the radiated RF source \cite{Nguyen_Van}.

On the other side, the reader (backscatter receiver) must decode the tag signal in order to extract the tag data. To ensure the accuracy and reliability of this task,   the reader must eliminate interference caused by external environmental signals \cite{Donatella_Darsena}. Thus, signal detection presents several challenges.  These challenges include:

\begin{itemize}
    \item\textbf{ Low backscatter signal strength:} The tag-reflected signal experiences deeper fades due to double path losses, resulting in a low received signal power or signal-to-noise ratio (SNR) at the reader. Moreover, the direct-link signal is typically strong and can cause direct interference to the reader \cite{Azar_Hakimi_1}. Additionally, other ambient RF signals can also corrupt the received signal, making detection even more challenging.  Subsequently, various strategies to solve this challenge have been developed \cite{Chong_Zhang,Morteza_Tavana}. 

    \item \textbf{Unknown parameters:}  The ambient RF source parameters, such as its bandwidth, transmit power, and waveforms, are typically not known. Thus, the cancellation of the direct-link interference (DLI) from the RF source to detect the tag signal is highly challenging.

    \item \textbf{Lack of Channel state information (CSI):} Another challenge arises from the ambiguity in acquiring channel state information (CSI) \cite{Yunkai_Hu_2}. Direct link channel estimation (CE) poses difficulties due to the absence of pilot signals from the ambient RF source. Moreover, the limited memory and simplicity of the tag device prevent it from dispatching an adequate number of pilots to facilitate backscatter link CE. Consequently, several studies have focused on addressing both CE and detection for \abc \cite{Donatella_Darsena}.   
\end{itemize}  

\subsection{Backscatter Signal detection methods  }
In the literature, numerous detection strategies have been developed for \abc signals \cite{Jing_Qian_2, Jing_Qian, Tao2020}. They can be broadly classified into coherent, noncoherent, and machine-learning-based.

 \begin{enumerate}
\item Coherent detection: Requires exact carrier phase knowledge and CSI, offering optimal error probability. However, obtaining CSI and carrier phase knowledge can be challenging in practice.

\item Non-coherent detection: Does not require carrier phase and CSI knowledge, reducing receiver complexity but sacrificing spectral efficiency or performance.

\item Semi-coherent detection: Combines both coherent and non-coherent aspects, using a limited number of training symbols to estimate required parameters without full CSI estimation.
\end{enumerate}

Coherent detection improves the sensitivity of the receiver and enables the detection of weak signals in the presence of noise \cite{xu2019sixty}.  For instance, \cite{Jing_Qian_4} derives a maximum likelihood (ML) detector, characterizing the outage probability. Similarly, \cite{Tao2020} derives the maximum a-posteriori (MAP) detector for  OOK tag modulation. A closed-form expression for the BER of this optimal detector is also derived.

Noncoherent detection, on the other hand, recovers data based on the statistical properties of the received signal.  For example, \cite{Jing_Qian} develops an \abc  noncoherent detector using the generalized likelihood ratio test (GLRT). The joint probability density function (PDF) of the incoming signal is examined in \cite{Sudarshan_Guruacharya} to investigate two non-coherent detectors.  To overcome the lack of training symbols \cite{Jing_Qian_2, Wang2016}, consider the differential encoder at the tag.   Furthermore, \cite{Wang2016}  suggests two detection thresholds, one of which provides roughly minimum BER and the other of which produces balanced error probability for detecting the tag bit. In addition, some IoT devices may be deployed in high-mobility situations, which will cause a shorter channel coherence time and a larger Doppler dispersion than in static scenarios. The authors in \cite{Kartheek_Devineni_2} examine the case of non-coherent detection of ambient signals in a time-selective fading channel via a model of a first-order autoregressive process. 

However, although non-coherent detection eliminates the use of  CSI, optimal detection requires precise CSI. For example,  \cite{Jing_Qian_2} designs a joint-ED and derives the detection threshold, which requires the estimation of some parameters. Next, in \cite{Donatella_Darsena},  the joint \abc CE and detection problem is studied where the full-duplex (FD) orthogonal frequency-division multiplexing (OFDM) access point (AP), and the intended recipient of the backscatter information, are incorporated. The authors leverage the cyclic prefix (CP) structure of OFDM symbols from RF sources to eliminate DLI at the reader. In addition, they solve the detection problem by using the space alternating generalized expectation maximization (SAGE) algorithm. As outlined in \cite{Mohamed_ElMossallamy_3}, the authors posit that backscattered signals can be modulated to operate within a different frequency band in an OFDM-AmBC system, thereby avoiding DLI from the RF source. In contrast, \cite{Youyou_Zhang} proposes a more straightforward approach known as the direct-link averaging detector (DL-AD) to eliminate DLI based on the log-likelihood ratio test while employing a semi-blind channel estimator. However, it is worth noting that these DLI cancellation techniques necessitate more intricate circuitry either at the tag or the reader, which may contradict the cost and energy-efficiency goals of passive backscatter systems. Furthermore, perfect DLI cancellation requires precise time and frequency synchronization, which presents its own set of challenges. 

Machine learning has recently become an active signal detection approach due to its ability to identify patterns in large datasets that are not easily detectable through conventional methods. By leveraging these patterns, machine learning algorithms can significantly enhance signal detection accuracy and precision. For instance,  \cite{Qianqian_Zhang_1}  applies unsupervised learning to detect tag data by extracting signal features based on energy information and grouping them into clusters. This method is further improved by transmitting labeled bits from the tag for cluster-bit mapping. Similarly, \cite{Yunkai_Hu_2} transforms binary phase-shift keying (BPSK) tag signal detection into a supervised machine learning classification problem, outperforming the traditional  MMSE detector.

Another example is the use of Hadamard codes to investigate the detection of BPSK tags \cite{Xiyu_Wang}. For multi-antenna readers, the approach is based on k-nearest neighbors (KNN) classification, where the first step involves eliminating direct interference from the RF source. The remaining signals undergo further signal processing, followed by learning detection and decoding in the third and fourth phases. Moreover, \cite{Chang_Liu} proposes the deep transfer learning (DTL) approach in a multi-antenna AmBC setup. They adopt a conventional neural network to extract data features from the formation matrix and use a covariance matrix aware neural network (CMNet) that is DTL-oriented to detect tag signals. Overall, these studies demonstrate the effectiveness of machine learning-based methods in improving the accuracy and precision of signal detection. Reference \cite{Qianqian_Zhang} introduces a label-assisted transmission framework for AmBC in IoT, eliminating the need for CSI estimation. The paper offers two detection methods using labeled signals and a mix of both labeled and unlabeled signals, both of which match the performance of perfect CSI detectors. Meanwhile, \cite{GuoZXL19} proposes cognitive AmBC for spectrum sharing. Addressing challenges posed by direct link interference from legacy systems, this paper suggests detectors leveraging multiple antennas and presents beamforming and likelihood-ratio-based detectors. A statistical clustering framework for CSI learning and backscatter detection is also introduced, with simulations showing these methods outperform traditional energy detectors (ED).

\subsection{Motivation and Contributions} 

Recently, deep neural networks (DNNs) have emerged as promising solutions for wireless CE  and data detection \cite{Braud2021,Schmidhuber}. They offer several advantages. Firstly, they can understand complex relationships between received signals and transmitted data, resulting in higher detection accuracy. Secondly, they reduce the computational complexity of signal detection by exploiting the parallel processing capabilities of DNNs and offline training. Finally, offline training reduces the energy consumption of the reader, making it more energy-efficient.

Thus, using DNNs for joint CE and detection offers the potential to enhance \abc performance. However, this approach has been underexplored in the context of AmBC. Prior work, such as \cite{Donatella_Darsena}, focused on monostatic backscatter systems with an integrated reader and AP, rather than true AmBC. Therefore, this paper investigates  CE and data detection in AmBC, presenting several significant contributions:
\begin{itemize} 
\item  Firstly, we propose an innovative DNN-based approach to estimate channel coefficients and extract tag symbols in AmBC systems. While the DTL approach was suggested in \cite{Chang_Liu}, we introduce a groundbreaking DNN-based method that outperforms existing methods in accurately estimating channel coefficients. It leverages the advanced capabilities of deep learning to adapt to varying channel conditions, making it more robust and efficient than conventional methods. By sufficiently training the DNN, our approach can learn the complex relationships between the received signals and the transmitted data, and accurately estimate channel coefficients, even in noisy and unpredictable environments. Our experiments demonstrate that the proposed DNN-based method significantly outperforms existing methods in terms of accuracy and robustness.

\item Secondly,  our work also makes a significant contribution by addressing CSI  acquisition using DNN methods. This aspect distinguishes our work from previous studies such as \cite{Qianqian_Zhang_1, Yunkai_Hu_2, Xiyu_Wang}. Previous research mainly focused on using machine learning techniques for detecting the tag signal but did not give enough attention to the critical problem of CSI acquisition. Our approach uses a joint optimization framework that simultaneously estimates the CSI and extracts the tag symbol using the proposed DNN architecture. Through this joint approach,  we can enhance the accuracy of data detection in AmBC systems. 
\item Thirdly, we use the fully connected network (FCN) architecture for our DNN approach due to its simplicity, transparency, and the foundational understanding it offers in this emerging field \cite{Hao_Ye,Braud2021,Xiang}. FCNs, despite their basic structure, are robust and have consistently shown strong performance, a fact reinforced by our experimental results. They provide computational efficiency, crucial for real-time AmBC applications while allowing scalability based on data availability. The uniqueness of our approach is further highlighted through its enhanced performance over conventional methods in AmBC.

\item Fourthly, we present an extensive comparison of the proposed detection-estimation method with traditional approaches, such as ML, semi-coherent (SemiCoh), energy, Bayesian, and GLRT detectors. This comparative analysis allows us to showcase the superiority of our proposed DNN-based method in terms of accuracy and efficiency compared to established techniques. This comparative study highlights the potential of our proposed method as a promising alternative to traditional approaches in addressing the challenges of AmBC.
\end{itemize}

In summary, our work introduces an efficient approach for joint CE and data detection in \abc. Our experiments validate its effectiveness in acquiring accurate CSI and extracting tag symbols, even in low signal-to-noise ratio (SNR)  scenarios, highlighting its potential for reliable AmBC applications.

\textit{Notation:} Vectors and matrices are represented by boldface lowercase letters and uppercase letters, respectively. For matrix $\mathbf{A}$,  $\mathbf{A}^H$ and $\mathbf{A}^T$  denote the Hermitian conjugate transpose and the transpose. Euclidean norms of complex vectors and absolute values of complex scalars are represented by $|\cdot|$ and $|\cdot|$, respectively. $f_X(.)$ denotes the probability density function of  $X$. The expectation operator is denoted by $\mathbb{E}[\cdot]$. A circularly symmetric complex Gaussian (CSCG) random vector with mean $\boldsymbol{\mu}$ and covariance matrix $\mathbf{C}$ is represented as $\sim \mathcal{C}\mathcal{M}(\boldsymbol{\mu}, \mathbf{C})$. Additionally, $\mathbb{R}^{M\times N}$ and $\mathbb{C}^{M\times N}$ denote   $M\times N$ dimensional real and  complex matrices.

\begin{figure*}[t]
\centering
	\includegraphics[width=7in]{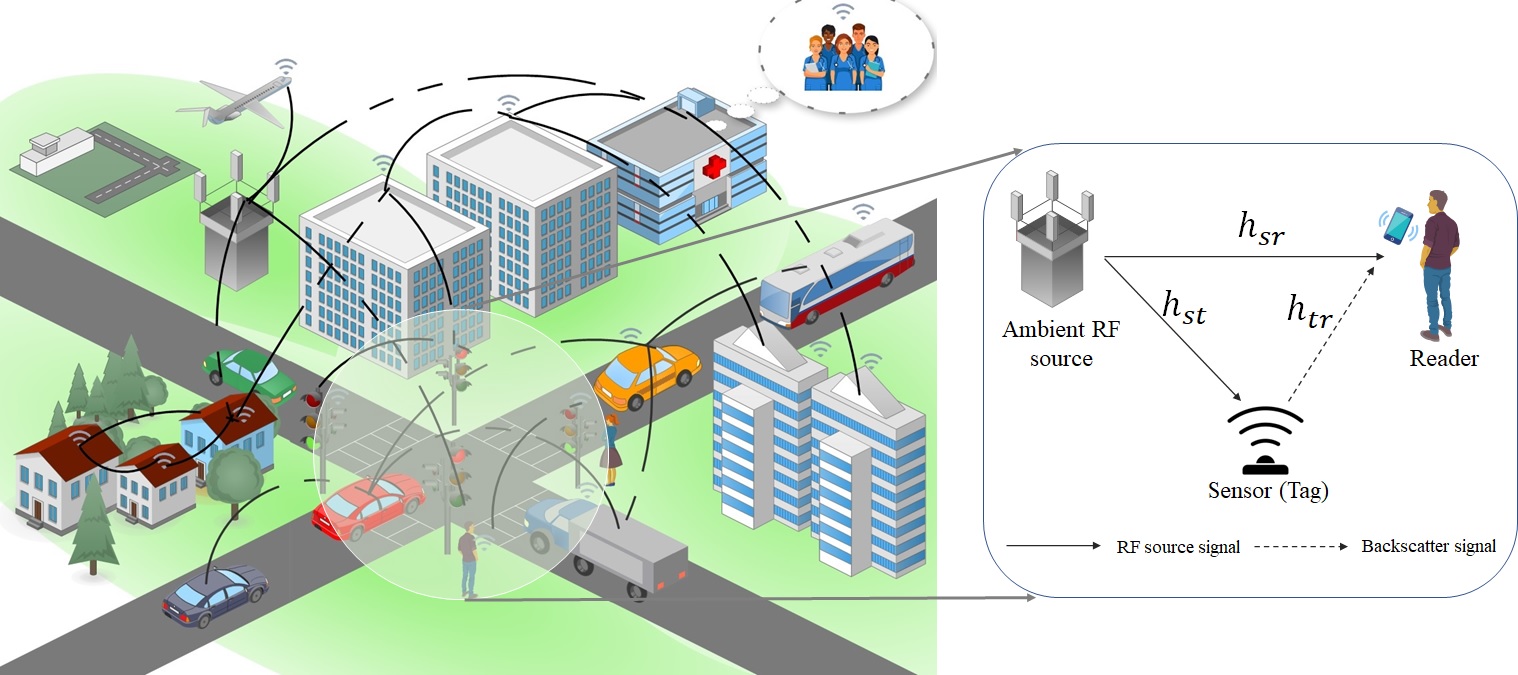}
	\caption{ Smart city IoT network application scenarios.}\label{system_model}
\end{figure*}

\section{AmBC System Model}\label{General_AmBC Systemodel}
The considered network -- Fig \ref{system_model} --  comprises an ambient RF source, a tag, and a reader.  All nodes are equipped with a single antenna \cite{Sudarshan_Guruacharya, Qianqian_Zhang_1}. both the reader and the tag receive the ambient signal from the  RF source.  The tag harvests energy from it and subsequently reflects its data to the reader based on the power splitting ratio, which is defined below.  However, the reader's detection process is affected by the strong direct link interference from the RF source. We next describe the  EH unit and the modulation block of the tag.  

\subsection{Tag's Backscatter Modulation \&  Energy Harvesting} 
\subsubsection{Backscatter Modulation}
A tag is a passive device with no active electronics and cannot generate an RF signal but reflects ambient RF signals opportunistically to send data. To do that, the tag tunes its load impedance depending on its bit sequence, which is called load modulation. For example, the tag realizes OOK by switching between two load impedances to generate bit “$0$'' or “$1$'' by matching or mismatching with antenna impedance, indicating absorbing or reflecting (i.e., non-backscattering and backscattering states), respectively. The  reflection coefficient of the tag is expressed as follows \cite{Nguyen_Van}:
\begin{align}\label{tag:ref}
\Gamma_{i} = \frac{Z_i-Z_{a}^\star}{Z_i+Z_{a}}, 
\end{align}
where $Z_{a}$ denotes the antenna impedance of the tag, which depends on the structure of the antenna, $Z_i$ is the load impedance of state $i = \{1, 2\}$. Here, we assume that the tag uses OOK modulation. It adjusts its impedance to either match or mismatch the incoming signal. More specifically, $ \Gamma_i = |\Gamma_i| e^{j \theta_i}$ where the tag can use distinct  $\theta_i\in \{0, \pi\}$ values to send its data. In addition, the reflection coefficients of impedance values have a constant magnitude, i.e., $|\Gamma_i|^2=|\Gamma|^2=\xi \in (0, 1]$. In particular, $\xi = \vert \Gamma \vert^2$ denotes the power reflection coefficient at the tag satisfying $0 \leq \xi \leq 1$. To design the constellation points, the load impedances $Z_i$ can then be computed via the Smith chart techniques \cite{ReedPH21}. 

\subsubsection{Energy Harvesting}
Each tag performs EH and data transmission simultaneously \cite{Azar_Hakimi}. Thus, the received RF signal power is divided into two parts based on a power splitting ratio, $\xi$. For more details, please see \cite{Rezaei2023,Galappaththige2022} and references therein. While we provide a brief overview of the EH aspect for context, our paper primarily concentrates on signal detection at the reader. The efficiency or intricacies of the EH phase do not have a direct impact on signal detection. However, we recognize the potential implications of EH and plan to investigate its impact in future research. 

\subsection{Channel Modeling}
Backscatter channels demonstrate unique characteristics in contrast to conventional communication channels. Specifically, the backscattered link is the cascade of the two channels from the transmitter to the tag and from the tag to the receiver. As a consequence, it encounters a dual-path loss phenomenon, raising the risk of severe deep fades. These occurrences can lead to communication outages and increased BERs. In our analysis, we assume a scattered environment and model the channel coefficients initially as zero-mean CSCG  random variables. This assumption is akin to adopting the small-scale Rayleigh flat fading model with a predetermined coherence time.
 
This assumption is widely used in AmBc studies for environments without a dominant line-of-sight (LoS) component \cite{Jing_Qian_4,Sudarshan_Guruacharya}. For example, it holds for environments such as factories where numerous obstructions and moving objects cause signal fluctuations. Similarly, during storage or transportation, tags enclosed in packaging or inside storage facilities may encounter multipath propagation due to reflections.

The system consists of three primary channels: the RF source-to-reader  ($h_{sr}$), the tag-to-reader ($h_{tr}$), and the RF source-to-tag  ($h_{st}$). They are characterized by their respective CSCG random variables, given by $h_{sr}\sim\mathcal{CN}(0, \sigma^2_{sr})$, $h_{tr}\sim\mathcal{CN}(0, \sigma^2_{tr})$, and $h_{st}\sim\mathcal{CN}(0, \sigma^2_{st})$. These channels experience independent Rayleigh fading, with their fading coefficients changing independently over distinct coherence time intervals.

We also consider Rician fading when there is the presence of a LoS component along with scattered paths.  Thus, the Rician fading channel model is given by 
\begin{equation}
    h = \sqrt{\frac{\kappa}{\kappa+1}} h^{\text{LoS}}_i + \sqrt{\frac{1}{\kappa+1}} h^{\text{NLoS}}_i,~i\in\{sr, tr, st\},
\end{equation}
where $\kappa$ is the Rician factor,  $h^{\text{LoS}}_i=1$ is the deterministic LoS component that corresponds to the direct path between the transmitter and receiver, without any obstructions or scattering. Also, $h^{\text{NLoS}}_i$ is the non-LoS component that follows the Rayleigh fading model.

\begin{remark}
The chosen tag frame structure is designed for slow-fading channel models, maintaining consistent channel conditions within each frame, a technique prevalent in AmBC studies \cite{finkenzeller2010rfid,Chang_Liu}. Complications emerge with fast-fading channels due to discrepancies in channel environments between training and test data, causing potential misalignment in the neural network's learned features, which could hinder accurate tag signal detection. Addressing this requires redesigning the tag frame structure,  a promising direction for future research. 
\end{remark}

\begin{remark}
Large-scale fading, also known as path loss, plays a crucial role in wireless communication channels for several reasons. These effects affect coverage prediction, interference mitigation, resource allocation, and energy efficiency.  The path loss is superimposed in the small-scale fading.  In this paper, the path loss model is introduced when the received SNRs are determined (see \eqref{pathloss_eq}).  
\end{remark}

\subsection{Signal model}
We represent  the RF source signal  as $s(n)$, satisfying $\mathbb{E}[|s(n)|^2 ]=1$. The signal $s(n)$ could be either a complex Gaussian signal or a modulated one. 
\begin{enumerate}
    \item[(a)] Complex Gaussian ambient source: in wireless communications, complex Gaussian signals, characterized by both magnitude and phase components, often serve as models for interference sources like artificial noise \cite{Lu2022}. These signals can be effectively represented as complex Gaussian random variables, denoted as $s(n)\sim \mathcal{CN}(0, P_s)$, where they exhibit a zero mean and power $P_s$. This assumption enjoys broad relevance in communication systems, as it aligns with the characteristics of numerous modulation schemes, such as OFDM, which manifest near-white Gaussian attributes in the time domain \cite{Shuangqing}. 

    \item[(b)] Modulated ambient source: In contrast, modulated signals are signals that have been altered in a specific way to carry information, e.g., signals of TV towers, cellular BSs, and Wi-Fi APs. Accordingly,  symbol $s(n)$ is assumed to be selected from a  $Q$-ary modulation alphabet, with a power of $P_s$. Thus, it is drawn from a constellation set $\mathcal{S} = \{S_1, S_2, \dots, S_Q\}$, where each symbol is equally likely   \cite{Youyou_Zhang,Qianqian_Zhang}.
\end{enumerate}

Consequently, the  signal received by the tag at the $n$-th sampling instance can be represented as \cite{Sudarshan_Guruacharya}
\begin{equation} 
x(n) = h_{st}s(n),
 \end{equation} 
The tag then reflects  $x(n)$  with reflection coefficient $\Gamma_i$ to communicate its own data. Given that ambient RF sources typically transmit at much higher rates than the tags,  $\Gamma_{i}$ remains constant during the $N$ observations interval \cite{Sudarshan_Guruacharya, Qianqian_Zhang_1}. 
Thus, the backscattered signal from the tag can thus be represented as
\begin{equation}
    x_b(n) = \Gamma_{i} x(n).  
\end{equation}
At the reader, the signal corresponding to the tag symbol is given by
\begin{align}
y(n) &= \sqrt{P_r} h_{sr} s(n) + \sqrt{P_c} h_{tr} x_b(n) + w(n) \nonumber\\
&= \left(\sqrt{P_r}h_{sr}  +  \sqrt{P_c} \Gamma_{i} h_{st} h_{tr}\right)s(n)   + w(n),
\end{align}
where $w(n)$ represents additive white Gaussian noise (AWGN) with mean zero and variance $\sigma^2_{w}$, and the noise samples are assumed to be independent. The average power received directly at the reader is given by $P_r$, while $P_c$ denotes the average power from the backscatter link at the reader.
 
The average received SNRs for both the direct link and the backscatter link can be defined as $\beta_d \triangleq \frac{P_r}{\sigma^2_{w}}$ and $\beta_b \triangleq \frac{P_c}{\sigma^2_{w}}$, respectively. Accordingly, the received power (dBm) at the reader is given by 
\begin{equation}\label{pathloss_eq}
    P_r = P_c + v_{1} \log(d_{st}) + v_{2} \log(d_{tr})-v_{3} \log(d_{sr}) -\log(\xi F G_l^2),
\end{equation}
where $F = \lambda^2/(4\pi)^2$ with $\lambda$ being the wavelength, and $G_l$ is the tag's antenna gain. Also, $d_{sr}$, $d_{st}$, and $d_{tr}$ denote the distances from the RF source to the reader, the RF source to the tag, and the tag to the reader, respectively. The path loss exponents are given by $v_i$, $i\in\{1,2,3\}$. By defining the relative SNR as $\eta \triangleq \frac{\beta_b}{\beta_d} = \frac{P_c}{P_r}$ \cite{Chang_Liu,Qianqian_Zhang}, the received signal at the reader can be represented as
\begin{align}
y(n) &=  \left(h_{sr} + \sqrt{\eta} h_{st} h_{tr}\right)s(n) + w(n).
\end{align}

A lower value of $\eta$ indicates that the backscattered signal is more prominent relative to the direct signal, which can improve the detection performance of the reader. Conversely, a higher value suggests that the direct signal is stronger, leading to increased interference and more challenging detection scenarios. 

\subsection{Signal Detection Problem}
This is a  binary hypothesis testing scenario. The signal received by the reader can be described differently under various hypotheses, as below \cite{kay_detection}:
\begin{equation}\label{received_signal_2}
y(n) =\left\{ \begin{array}{l}
h_{0} s(n) + w(n),~\quad \text{if decide on} ~H_0,\\
h_{1} s(n) + w(n),~\quad \text{if decide on} ~H_1,\\
\end{array} \right.
\end{equation}
Here, $H_0$ corresponds to the null hypothesis and $H_1$ corresponds to the hypothesis that the transmitted symbol is $\Gamma_1$. In addition, $h_{0} = h_{sr}$ denotes the direct channel link and $h_{1} = h_{sr} + \Gamma_{1} h_{st} h_{tr}$ indicates the composite channel link. Based on the hypotheses, the reader can decode the transmitted symbols of the tag. It decides between two hypotheses, i.e., $H_0$ and $H_1$, based on the observed data. To make a decision, two types of information are required \cite{levy2008principles}. Firstly, we need the a-priori probabilities, represented by $\pi_0=P(H = H_0)$ and $\pi_1=P(H = H_1)$, where they should satisfy $\pi_0+\pi_1 = 1$. Secondly, the measurement model for observation vector $\mathbf{y} = [y(0),y(1),\ldots, y(N-1)]^T$, where $N$ denotes the total number of observations, is required. Particularly, the measurement model represents the probability density conditioned on each hypothesis as:
\begin{equation}\label{distribution}
\begin{array}{l}
H_0:  \quad  \mathbf{Y} \sim f_{\mathbf{Y}}(\mathbf{y}|H_0),\\
H_1:  \quad  \mathbf{Y} \sim f_{\mathbf{Y}}(\mathbf{y}|H_1),
\end{array}  
\end{equation}
which are referred to as likelihood functions. To make a decision, the range of $\mathbf{Y}$, denoted as $\mathcal{Y}$, is divided into two decision regions, $\mathcal{Y}_0$ and $\mathcal{Y}_1$, such that if $\mathbf{y} \in \mathcal{Y}_i$, then hypothesis $H_i$ is selected as the best match for the data. Therefore, the design of the decision region is crucial.

\section{Detector Design}\label{detector_design}
 As mentioned before, detection can be classified as coherent, noncoherent, or semi-coherent. However, recent advancements in deep learning have led to the development of DNNs for joint CE and data detection \cite{Hao_Ye,Braud2021}. The DNN can leverage the robust feature extraction and learning capabilities of deep learning models to enhance the accuracy and efficiency of both tasks. Before getting into the DNN, we first describe the most common conventional detectors. 
 
\subsection{Maximum Likelihood Detector}
The ML detector is optimal in a specific sense. Although it offers higher accuracy than an  ED detector, it requires knowledge of the statistical properties of the transmitted signal and noise.  This detector can be developed as follows.  Assuming independence between the sampled signals $y(n)$ and noise $w(n), \forall n$ at the reader, the received signal can be modeled as a Gaussian distribution. The received signal vector $\mathbf{y}$ under hypotheses $\mathcal{H}_0$ and $\mathcal{H}1$ can thus be expressed as
\begin{equation}\label{received_signal_3}
\mathbf{y} \sim\left\{ \begin{array}{l}
\mathcal{CN}(\mathbf{0}, \delta^2_{0} \mathbf{I}_N),~\quad \text{if} \:\:\mathcal{H}_0,\\
\mathcal{CN}(\mathbf{0}, \delta^2_{1}\mathbf{I}_N), ~\quad \text{if} \:\:\mathcal{H}_1,
\end{array} \right.
\end{equation}
where $\delta^2_{0} =|h_0|^2P_s+\sigma^2_w$ and $\delta^2_{1} =|h_1|^2P_s+\sigma^2_w.$ The variances $\delta^2_{0}$ and $\delta^2_{1}$ reflect the combined effects of signal power and noise power for both hypotheses. Consequently, the ML detector involves a likelihood ratio test based on the energy of the received signal vector, denoted by $z=\|\mathbf{y}\|^2$. The ratio is given by
\begin{equation}
    \frac{P(\mathbf{y}|\mathcal{H}_0)}{P(\mathbf{y}|\mathcal{H}_1)}=\left(\frac{\delta^2_1}{\delta^2_0}\right)^N \text{exp} \left(\frac{\delta^2_0-\delta^2_1}{\delta^2_0\delta^2_1}z\right).
\end{equation}
Here, $P(\mathbf{y}|\mathcal{H}_i)$, $\forall i\in \{0,1\}$ represents the  PDF of $\mathbf{y}$ under different hypotheses.  The likelihood ratio only depends on the energy of the received signal vector, allowing for a decision rule based solely on $z$. With equiprobable transmitted messages $\Gamma = \{0, 1\}$, the ML decision rule can be written as follows \cite{Jing_Qian}:
 \begin{equation}\label{ML_detectorr}
   L_{\text{ML}}(\mathbf{y}) = \frac{P(\mathbf{y}|\mathcal{H}_0)}{P(\mathbf{y}|\mathcal{H}_1)} \vc{\gtreqless}{\mathcal{H}_0}{\mathcal{H}_1} 1   \Longrightarrow
  \left\{ \begin{array}{l}
    z \vc{\gtreqless}{\mathcal{H}_0}{\mathcal{H}_1} \Theta_{\text{ML}}^{\text{Th}}, \quad  \delta_0^2 > \delta_1^2, \\
    z \vc{\gtreqless}{\mathcal{H}_1}{\mathcal{H}_0} \Theta_{\text{ML}}^{\text{Th}},  \quad \delta_0^2 < \delta_1^2,
  \end{array} \right.
\end{equation}
where $\Theta_{\text{ML}}^{\text{Th}}$ is the detection threshold, which is given as $\Theta_{\text{ML}}^{\text{Th}} = \frac{N\delta_0^2\delta_0^2}{\delta_1^2-\delta_0^2}\ln \frac{\delta_1^2}{\delta_0^2}$  \cite{Jing_Qian}. 
The decision rule aims to minimize the probability of error by comparing the likelihood of the received signal $ L_{\text{ML}}(\mathbf{y})$ under both hypotheses $\mathcal{H}_0$ and $\mathcal{H}_1$. Specifically, \eqref{ML_detectorr} also indicates that the ML detector can be referred to as a modified energy detection. It is worth mentioning that if $\delta_0^2 = \delta_1^2$, the detection is unsuccessful due to indistinguishable hypotheses. However, the case with $\delta_0^2 = \delta_1^2$ (i.e., $h_{st} = 0$ or $h_{tr} = 0$) is disregarded, as $h_{st} = 0$ indicates a tag backscatter failure and $h_{tr} = 0$ signifies the absence of a received signal from the tag. 

The ML approach provides optimal detection results when the statistical properties of the signals and noise are known. However, perfect CSI may not be available in practical AmBC systems due to the lack of cooperation between the reader and the RF source. This limitation highlights the need for robust detection techniques that can work efficiently in the absence of perfect CSI or under varying channel conditions. Although the knowledge of CSI is unavailable, the values of $\delta_i^2$ can be estimated in a way that will be presented in the next section.

\subsection{Semi-Coherent Detector}
Although the CSI is unknown, it is possible to estimate the values of $\delta^2_i$, $\forall i\in\{0,1\}$ blindly \cite{Jing_Qian}. We term this approach as a SemiCoh detector. Let us briefly describe the steps of blind estimation. The parameters $\delta^2_0$ and $\delta^2_1$ represent the mathematical expectation values of the received signal energy under different hypotheses. We can estimate $\delta^2_i$,$\forall i\in\{0,1\}$  by computing the average energies of a set of received signals with unknown values.

\begin{algorithm}[t]
\caption{Semi-Coherent Detector Algorithm}
\label{alg:channel_energy}
\begin{algorithmic}[1]
\renewcommand{\algorithmicrequire}{\textbf{Input:}}
\renewcommand{\algorithmicensure}{\textbf{Output:}}
\REQUIRE Number of tag symbols $M$, number of training symbols $M_t$, initialize empty list $\Sigma$
\FOR{$m = 1$ \textbf{to} $M$}
\STATE Calculate the normalized energy of $y_m$ as $\Sigma_m = \frac{|\mathbf{y}_m|^2}{N}$ and append to $\Sigma_m to \Sigma$
\ENDFOR
\STATE Sort $\Sigma$ in ascending order to obtain $\Sigma^\uparrow$ and divide into two equal parts: $\Sigma^\uparrow_1$ (first half) and $\Sigma^\uparrow_2$ (second half)
\STATE Compute the averages of the elements in $\Sigma^\uparrow_1$ and $\Sigma^\uparrow_2$ as $\Sigma_\text{min}$ and $\Sigma_\text{max}$, respectively
\STATE Compute the average of $M_t$ normalized powers as $\Sigma_t = \frac{1}{M_t}\sum_{j=1}^{M_t} \frac{|\mathbf{y}_{tj}|^2}{N}$
\IF{$|\Sigma_\text{min} - \Sigma_t| < |\Sigma_\text{max} - \Sigma_t|$}
\STATE Set $\hat{\delta}^2_0 = \Sigma_\text{max}$ and $\hat{\delta}^2_1 = \Sigma_\text{min}$
\ELSE
\STATE Set $\hat{\delta}^2_0 = \Sigma_\text{min}$ and $\hat{\delta}^2_1 = \Sigma_\text{max}$
\ENDIF
\RETURN Estimated values $\{\hat{\delta}^2_0, \hat{\delta}^2_1\}$
\end{algorithmic}
\end{algorithm}

Specifically, let us assume that the channel energy remains constant during $M$ symbol periods of the tag, or correspondingly, $MN$ instances of $s(n)$. The received signal vectors at the reader during this time are denoted as $\mathbf{y}_m$ for $m = \{1, \ldots, M\}$. The SemiCoh signal detection algorithm is presented in \textbf{Algorithm \ref{alg:channel_energy}}. The estimation procedure consists of the following steps \cite{Jing_Qian}:
\begin{itemize}
    \item Calculate the energy of each received signal $\mathbf{y}_m$ and normalize it by dividing it by the number of samples in the signal. This normalization allows for comparison between different received signals. Arrange the normalized energies in ascending order to facilitate their division into two groups, corresponding to the two hypotheses $\mathcal{H}_0$ and $\mathcal{H}_1$.
     \item Since the tag sends ``0" and ``1" with equal probability, we can assume that the first half of the sorted energies corresponds to hypothesis $\mathcal{H}_0$ and the second half to hypothesis $\mathcal{H}_1$. Calculate the average energy for each half, which will provide estimates for $\hat{\delta}^2_0$ and $\hat{\delta}^2_1$. 
     \item Assume the tag sends $M_t \geq 1$ training bits and corresponding received signal vectors are denoted as $\mathbf{y}_{tj}$, $j = \{1, \ldots, M_t\}$. Compute the average normalized power of the $M_t$ training bits, providing an additional signal energy estimate. Combining this information with the results from steps 7 and 8, refine the estimates of $\hat{\delta}^2_0$ and $\hat{\delta}^2_1$.
     \item   These steps provide a practical approach to estimate the required parameters for estimating $\delta^2_0$ and $\delta^2_1$, based on the received signal energies under different hypotheses.   
\end{itemize}

\subsection{Energy Detector}
Ideally, an optimal detector relies on parameters and statistics of the received signal, but these are typically unavailable or require additional work to estimate accurately. Therefore, the ED is proposed in the literature as a method that reduces the need for other parameter values and channel knowledge \cite{Kartheek_Devineni_2,Kang_Lu,5208031,6987540}. The ED test statistic is based on the average energy of the received signal samples \cite{Urkowitz}. The reader determines the transmitted data by averaging the received signal energy over $N$ samples, given as follows:
\begin{equation}\label{E_simple}
\mathcal{E} =\frac{1}{N}\sum^N_{n=1} |y(n)|^2.
\end{equation}
More specifically, an ED is a device used in signal processing to extract the baseband signal from a modulated one \cite{Urkowitz}. This is achieved by removing negative portions of the signal and filtering out the carrier frequency, resulting in a duplicate of the original signal with a direct current (DC) offset. EDs simplify synchronization but are less effective than coherent detection and can be affected by interference and noise. However, they're a viable solution for cost-effective, low-power networks like passive IoT systems. In particular, the probability distribution of $\mathcal{E}$ under different hypotheses can be expressed as
\begin{equation}
\Lambda \sim\left\{ \begin{array}{l}
\mathcal{H}_0:~A_0+B_0,~\quad \text{if} \:\: \Gamma = 0,\\
\mathcal{H}_1:~A_1+B_1, ~\quad \text{if} \:\: \Gamma = 1,
\end{array} \right.
\end{equation}
where the components can be calculated as
\begin{align}
    A_0 &= \frac{1}{N} \sum_{n=1}^{N} |h_0|^2 |s(n)|^2  +|w(n)|^2, \nonumber \\    A_1 &= \frac{1}{N} \sum_{n=1}^{N} |h_1|^2 |s(n)|^2 + |w(n)|^2,\nonumber \\    
    B_0 &= \frac{1}{N} \sum_{n=1}^{N} 2\Re \{h_0 s(n) w^H(n)\}, \nonumber \\     
    B_1 & = \frac{1}{N} \sum_{n=1}^{N} 2\Re \{h_1 s(n) w^H(n)\}.
\end{align}
As $N$ increases, the values of $B_0$ and $B_1$ approach zero as the noise $w(n)$ and the signal $s(n)$ are uncorrelated. According to the central limit theorem (CLT) \cite{Proakis}, $B_0$ and $B_1$ can be denoted as $B_0 \sim \mathcal{N}(0, \gamma_0^2)$ and $B_1 \sim \mathcal{N}(0, \gamma_1^2)$, with their variances given by
$ \gamma_0^2 = \frac{2}{N} |h_0|^2 P_s \sigma^2_{w}$ and $ \gamma_1^2 = \frac{2}{N} |h_1|^2 P_s \sigma^2_{w}$ \cite{Wang2016}. As a result, under different hypotheses, we have
\begin{equation}
\Lambda \sim\left\{ \begin{array}{l}
\mathcal{H}_0:~\Lambda_0   \sim \mathcal{N}({\delta}_0 , \gamma_0^2),~\quad \text{if} \:\: \Gamma = 0,\\
\mathcal{H}_1:~\Lambda_1   \sim \mathcal{N}({\delta}_1 , \gamma_1^2), ~\quad \text{if} \:\: \Gamma = 1.
\end{array} \right.
\end{equation}
Then, the ML decision rule can be expressed as
\begin{equation}
L_\text{ED}(\Lambda) = \Lambda \vc{\gtreqless}{\mathcal{H}1}{\mathcal{H}0} \Theta_{\text{ED}}^{\text{Th}},
\end{equation}
where $\Theta_{\text{ED}}^{\text{Th}}$ is the detection threshold, given by  $\Theta_{\text{ED}}^{\text{Th}} = \frac{{\delta}_0 \gamma_1 + {\delta}_1\gamma_0}{\gamma_0+\gamma_1}$ \cite{Kang_Lu}.
The ED detector is a widely used and straightforward technique for measuring the received signal energy. While it is computationally efficient, it may not provide the best detection accuracy.

\subsection{Bayesian Detector}
A Bayesian detector is a statistical approach based on the Bayes theorem, providing noise resistance and adaptability to environmental changes. Despite its higher computational complexity than ML detector, it excels in incorporating prior knowledge, handling complex signal models, and updating estimates with new observations.  The basic detection rule is as follows. \begin{proposition}
The detector decision can be determined by
\begin{equation}
 L_\text{Baysian}(z) \vc{\gtreqless}{\mathcal{H}1}{\mathcal{H}0} \Theta_{\text{Baysian}}^{\text{Th}},
\end{equation}
where 
\begin{align}
  L_\text{Baysian}(z) &= \log   \int_{\sigma_w^2}^{\infty} \frac{e^{-z/t}}{t^N} \mathcal{I}_1  \left(\frac{t - \sigma_w^2}{P_s}; \sigma_{sr}^2, \xi  \sigma_{st}^2 \sigma_{tr}^2 \right) \mathrm{d} t\nonumber\\
  &- \log \mathcal{I}_N(z; \sigma_w^2, \sigma_{sr}^2 P_s), \nonumber
\end{align}
with an optimal decision threshold $\Theta_{\text{Bayesian}}^{\text{Th}} = \log(\frac{K_0}{K_1})$. 
\end{proposition}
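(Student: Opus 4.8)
The plan is to obtain the detector as the Bayes-optimal (MAP) rule and then reduce it, via the sufficiency of the energy statistic, to the stated log-likelihood-ratio comparison. First I would write the posterior-based rule: decide $\mathcal{H}_1$ over $\mathcal{H}_0$ whenever $\pi_1 P(\mathbf{y}\mid\mathcal{H}_1) > \pi_0 P(\mathbf{y}\mid\mathcal{H}_0)$, i.e.\ $\log\frac{P(\mathbf{y}\mid\mathcal{H}_1)}{P(\mathbf{y}\mid\mathcal{H}_0)} \gtreqless \log\frac{\pi_0}{\pi_1}$, and absorb the priors (together with any Bayes costs $C_{ij}$) into the constants $K_0,K_1$ so that the threshold reads $\Theta_{\text{Bayesian}}^{\text{Th}}=\log(K_0/K_1)$. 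As already established for the ML detector, the likelihood depends on $\mathbf{y}$ only through the energy $z=\|\mathbf{y}\|^2$: under the complex Gaussian source model the $y(n)$ are, conditioned on the channel, i.i.d.\ zero-mean complex Gaussian, so the conditional density depends on $\mathbf{y}$ only through $z$, and this property survives marginalization over the random channel. I can therefore replace $P(\mathbf{y}\mid\mathcal{H}_i)$ by $P(z\mid\mathcal{H}_i)$ and identify $L_{\text{Bayesian}}(z)=\log P(z\mid\mathcal{H}_1)-\log P(z\mid\mathcal{H}_0)$.

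The core of the proof is then to evaluate the two marginal likelihoods $P(z\mid\mathcal{H}_i)$, in which the \emph{random} channels are integrated out; this is precisely what distinguishes the Bayesian detector from the ML/semi-coherent ones that treat the channel powers as fixed parameters. I would proceed in two steps per hypothesis. First, condition on the channel: given $h_{sr}$ (under $\mathcal{H}_0$) or the composite $h_1=h_{sr}+\Gamma_1 h_{st}h_{tr}$ (under $\mathcal{H}_1$), each $|y(n)|^2$ is exponential with mean $\delta_i^2=|h_i|^2 P_s+\sigma_w^2$, so $z$ is Erlang/Gamma with shape $N$ and scale $t:=\delta_i^2$, giving the kernel $\frac{z^{N-1}e^{-z/t}}{t^N\,\Gamma(N)}$. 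Second, marginalize over the channel power. Under $\mathcal{H}_0$, $|h_{sr}|^2$ is exponential (Rayleigh fading), so with the change of variable $t=|h_{sr}|^2 P_s+\sigma_w^2$ the density of $t$ is a shifted exponential on $[\sigma_w^2,\infty)$ with scale $\sigma_{sr}^2 P_s$; substituting and recognizing the resulting single integral as $\mathcal{I}_N(z;\sigma_w^2,\sigma_{sr}^2 P_s)$ yields the second term. Under $\mathcal{H}_1$ the same conditioning applies, but the density of $t$ is now governed by the PDF of the composite channel power $|h_1|^2$, which enters as $\mathcal{I}_1(\,\cdot\,;\sigma_{sr}^2,\xi\sigma_{st}^2\sigma_{tr}^2)$ evaluated at $(t-\sigma_w^2)/P_s$; this produces the integral in the first term. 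The common Jacobian factor $z^{N-1}/\Gamma(N)$ (and any channel-independent constants) is identical under both hypotheses and cancels in the difference of logarithms, which is why the first term appears stripped of it.

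The main obstacle will be establishing the distribution of the composite channel power $|h_1|^2$ underlying $\mathcal{I}_1$. Because $h_1$ is the sum of the direct-link Gaussian $h_{sr}$ and the \emph{cascade} term $\Gamma_1 h_{st}h_{tr}$ --- a product of two independent complex Gaussians, which is itself non-Gaussian --- $|h_1|^2$ is not simply exponential, and this is exactly why the first term cannot collapse into another $\mathcal{I}_N$. I expect to resolve this by a second round of conditioning: given $h_{st}h_{tr}$, the channel $h_1$ is conditionally $\mathcal{CN}\!\left(\Gamma_1 h_{st}h_{tr},\,\sigma_{sr}^2\right)$, so $|h_1|^2$ is conditionally noncentral (Rician-power) with noncentrality $|\Gamma_1|^2|h_{st}h_{tr}|^2=\xi|h_{st}h_{tr}|^2$, expressible through a modified Bessel function of the first kind. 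Integrating over the double-Rayleigh density of $|h_{st}h_{tr}|^2$ (which itself carries a modified Bessel function of the second kind) then produces $\mathcal{I}_1$ with its two variance arguments $\sigma_{sr}^2$ and $\xi\sigma_{st}^2\sigma_{tr}^2$ entering separately. Finally I would assemble the two contributions into $L_{\text{Bayesian}}(z)$ and verify that the prior/cost bookkeeping reproduces exactly the threshold $\log(K_0/K_1)$, completing the derivation.
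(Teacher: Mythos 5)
Your overall route coincides with the paper's: both form the Bayesian likelihood ratio by marginalizing the random channel powers $v_i=|h_i|^2$ out of the conditional complex-Gaussian density $\text{Pr}(\mathbf{y}\mid v_i)=(\pi(v_iP_s+\sigma_w^2))^{-N}e^{-z/(v_iP_s+\sigma_w^2)}$, note that everything depends on $\mathbf{y}$ only through $z=\|\mathbf{y}\|^2$, recognize the $\mathcal{H}_0$ marginal as $K_0\,\mathcal{I}_N(z;\sigma_w^2,\sigma_{sr}^2P_s)$ via the change of variable $t=v_0P_s+\sigma_w^2$, and take logarithms. Two points where you deviate deserve comment. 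First, you plan to derive the density of the composite power $|h_1|^2$ by conditioning on the full cascade product $h_{st}h_{tr}$, which makes $|h_1|^2$ conditionally noncentral and forces you to integrate a Rician-power kernel (modified Bessel $I_0$) against the double-Rayleigh product density (modified Bessel $K_0$); you assert, but do not show, that this double integral collapses to the single-integral form $\mathcal{I}_1(v_1;\sigma_{sr}^2,\xi\sigma_{st}^2\sigma_{tr}^2)$. The paper sidesteps this entirely by citing the result from \cite{Guruacharya_2}; the clean derivation conditions on only \emph{one} cascade channel, say $h_{st}$, so that $h_1\mid h_{st}\sim\mathcal{CN}(0,\sigma_{sr}^2+\xi|h_{st}|^2\sigma_{tr}^2)$ remains zero-mean Gaussian, $|h_1|^2\mid h_{st}$ is exponential, and a single change of variable immediately yields $\mathcal{I}_1$. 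Your noncentral route is not wrong in principle, but the key reduction is exactly the step you leave unverified, and it is the hard way to a result obtainable in two lines.

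Second, your bookkeeping for the threshold is misattributed. In the paper $K_0=\exp\bigl(\sigma_w^2/(\sigma_{sr}^2P_s)\bigr)/(\pi^N\sigma_{sr}^2P_s)$ and $K_1=\exp\bigl(\sigma_{sr}^2/(\xi\sigma_{st}^2\sigma_{tr}^2)\bigr)/(\pi^N\xi\sigma_{st}^2\sigma_{tr}^2P_s)$ are the normalization constants that fall out of the two marginalization integrals; the likelihood ratio is compared to $1$ (equal priors), and $\Theta_{\text{Bayesian}}^{\text{Th}}=\log(K_0/K_1)$ arises simply from moving these constants from the two log terms to the right-hand side. They are not containers for priors or Bayes costs as you propose. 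If you carry out the integrals you will recover the correct constants anyway, so this is a misreading of where the threshold comes from rather than a fatal flaw, but as written your plan would not "verify" the stated threshold from prior/cost arguments.
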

\begin{proof}
    Please see Appendix \ref{app_1}.
\end{proof}
The Bayesian approach typically has higher computational complexity than other detection methods due to the need for integration and calculation of posterior probabilities.

\subsection{GLRT Detector}
The GLRT is another statistical method that involves estimation and detection. The goal of GLRT is to jointly estimate the unknown parameters $(v_0, v_1)$ and replace the unknown parameters with their ML estimates under each hypothesis. The GLRT can provide improved detection accuracy compared to ED and SemiCoh detectors. Also, it is a suboptimal detector that does not require the a-priori probabilities of the unknown parameters. The GLRT technique begins with the calculation of the maximum log-likelihood estimate of the unknown parameter $v$. This estimate, denoted as $v^*$, can be computed using: $ v^* =\underset{v\geq0}{\text{argmax}} ~  \log \text{Pr}\left(\mathbf{y}| v \right)$. After performing some elementary calculus, the ML estimate of $v$ is obtained as follows $v^* = \left(\frac{z}{NP_s} -\frac{\sigma_w^2}{P_s} \right)_+$, where $(x)_+=\max(0,x)$ \cite{Sudarshan_Guruacharya}. Using the estimated value of $v^*$, the system can determine which hypothesis, $\mathcal{H}_0$ or $\mathcal{H}_1$, is more likely to be true. Consequently, the GLRT is defined as follows:
\begin{equation}
L_\text{GLRT}(\mathbf{y}) \overset{\Delta}{=} \frac{\text{Pr}(v^*|\Gamma=1 )}{\text{Pr}(v^*|\Gamma=0)}  \vc{\gtreqless}{\mathcal{H}_1}{\mathcal{H}_0} 1 ,
\end{equation}
where $\text{Pr}(v^*|\Gamma=0 )$ and $\text{Pr}(v^*|\Gamma=1 )$ are given by \eqref{v_0} and \eqref{v_1}, respectively. Subsequently, we can simplify the test statistics as follows:
\begin{equation}
  L_\text{GLRT}(\mathbf{y}) = \frac{v^*}{\sigma_{sr}^2} + \log \mathcal{I}_1 (v^*; \sigma_{sr}^2, \xi \sigma_{st}^2 \sigma_{tr}^2), 
\end{equation}
with an optimal decision threshold $\Theta_{\text{Baysian}}^{\text{Th}}= \log \left(\frac{\xi \sigma_{st}^2 \sigma_{tr}^2}{\sigma_{sr}^2 } \right) - \frac{\sigma_{sr}^2}{\xi \sigma_{st}^2  \sigma_{tr}^2}$. The detector decision can be determined by
\begin{equation}
 L_\text{GLRT}(z) \vc{\gtreqless}{\mathcal{H}1}{\mathcal{H}0} \Theta_{\text{GLRT}}^{\text{Th}}.
\end{equation}
This approach can lead to improved detection accuracy when the channel gains are accurately estimated. Finally, Table \ref{tab:detectors} presents a comparison of different signal detection techniques used in AmBC. A proposed DNN method is also included in the comparison. The table helps to provide an overview of the advantages and disadvantages of different detection methods and can be used to guide the selection of the most appropriate technique for a particular application.

\begin{table*}[t]
\centering
\caption{Comparison of Detectors}
\label{tab:detectors}
\begin{tabular}{|l|c|c|c|c|}
\hline
\rowcolor[HTML]{EFEFEF}
\textbf{Detector} & \textbf{Complexity} & \textbf{Robustness to Noise} & \textbf{Robustness to Environmental Changes} & \textbf{Optimality} \\ \hline
ML & High & Moderate & Moderate & Optimal \\ \hline
Bayesian & High & High & High & Suboptimal \\ \hline
ED & Low & Low & Low & Suboptimal \\ \hline
GLRT & Moderate & Moderate & Moderate & Suboptimal \\ \hline
SemiCoh & Moderate & High & Moderate & Suboptimal \\ \hline
DNN & High & High & High & Suboptimal \\ \hline
\end{tabular}
\end{table*}

\section{ Deep Learning-Based Joint Estimation and Detection}\label{section_IV}
A DNN is a specific type of artificial neural network (ANN) comprising multiple interconnected layers of nodes (also known as neurons) that can learn intricate patterns in data. Deep learning has achieved remarkable accomplishments across various domains, including computer vision, natural language processing, and speech recognition \cite{goodfellow2016deep}. A DNN architecture typically consists of an input layer, one or multiple hidden layers, and an output layer.

\begin{enumerate}
\item \textit{Input layer:} This,  being the first layer of the neural network, accepts input data in various forms such as images, audio, or text. It processes the input data and forwards it to the next layer.

\item \textit{Hidden Layers:} Hidden layers reside between the input and output layers in a neural network, and their quantity and the number of neurons within each layer are determined by the intricacy of the problem. Every neuron in the hidden layer accepts input from the previous layer, calculates a weighted sum of the input, and passes the output through an activation function. This process introduces nonlinearity into the network, allowing it to learn complex associations between input and output variables.

\begin{itemize}
    \item Activation functions are essential components in artificial neural networks, particularly deep learning models, as they introduce nonlinearity into the network. Popular activation functions include the Sigmoid function, which maps input values to the range $(0, 1)$ and is often used in the output layer for binary classification problems, and the Rectified Linear Unit (ReLU) function, which is computationally efficient and helps mitigate the vanishing gradient problem in deep networks \cite{goodfellow2016deep}.

    \item Batch normalization is another technique to enhance DNN training by addressing the internal covariate shift issue that arises when input distributions change due to weight updates in preceding layers, slowing down the learning process \cite{goodfellow2016deep}. It normalizes input features for each layer, achieving a mean of $0$ and a standard deviation of $1$. Therefore, it stabilizes input distributions and accelerates learning.
\end{itemize}

\item \textit{Output:} As the final layer, the output layer generates the output predictions.

\end{enumerate}
\begin{figure}[t]
\centering
	\includegraphics[width=3.5in]{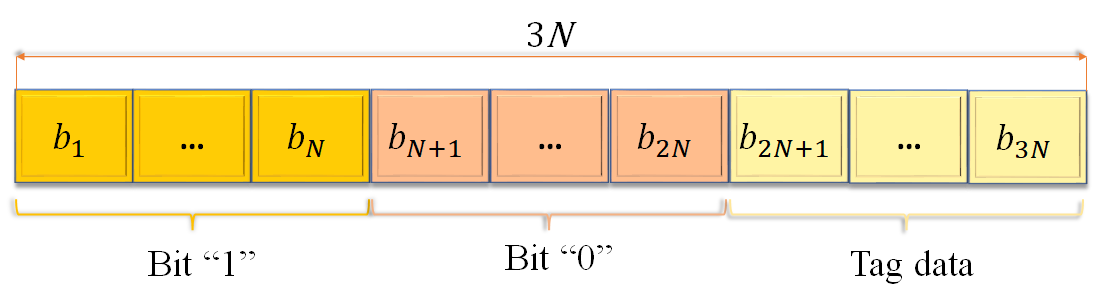}
 	\caption{The frame structure of the tag signals.} \label{frame_structure}
\end{figure}

To obtain an effective DNN model for joint CE and data detection, we first need to design the tag frame structure. The tag frame structure consists of one data symbol and two pilot symbols \cite{Chang_Liu}. The pilot symbols are known to the reader, while one of the remaining tag symbols is employed for data transmission. The signal model frame structure at the reader is depicted in Fig. \ref{frame_structure}, comprising $M_t=2N$ pilot symbols and $3N - M_t$ data symbols in a single frame. Indeed, each tag symbol, whether it is a pilot or a transmitted bit, remains consistent throughout the $N$ RF source symbol periods. 
\begin{remark}
 While backscatter data rates might theoretically match RF sources, they often run into practical challenges in AmBC systems. Notably, high data rates demand swift switching, leading to advanced tag processing, which runs counter to AmBC's core tenets of simplicity and energy efficiency \cite{Kampianakis,Vincent_Liu}. Real-world deployments further validate this approach, as most backscatter devices, such as smart home sensors, inherently prioritize low power over high data rates \cite{ReedPH21,3gpp-tr-22.840}.
\end{remark}
This frame structure is designed for slow-fading channel models where the channel stays constant for each frame. The current random channel is simulated using the Rayleigh fading channel model, and the received signal is acquired by applying channel distortions, including noise, to the tag frames. The training data is gathered by combining the received signal with the original transmitted bit. Subsequently, the real and imaginary components of the tag frames are utilized as input for the DNN. The output layer employs the Sigmoid function to map the results to the $[0, 1]$ range. Then, the DNN is trained on a large dataset of known channel conditions and their corresponding received signals.  The hidden layers perform nonlinear transformations of the input signal to extract features relevant for estimating channel conditions, and the output layer produces an estimate of the transmitted data.

In the subsequent sections, we will delve into a comprehensive examination of the DDN's structure, accompanied by some theoretical analysis.

\subsection{Deep Neural Network Model}
The DNN architecture -- Figure \ref{DNN_architecture} -- consists of three fully connected layers with varying numbers of neurons, specifically $512$, $256$, and $128$ neurons in each layer, respectively. The number of neurons in each layer is selected based on a balance between computational efficiency and model accuracy.

\begin{figure*}[t]
\centering
	\includegraphics[width=5.8in]{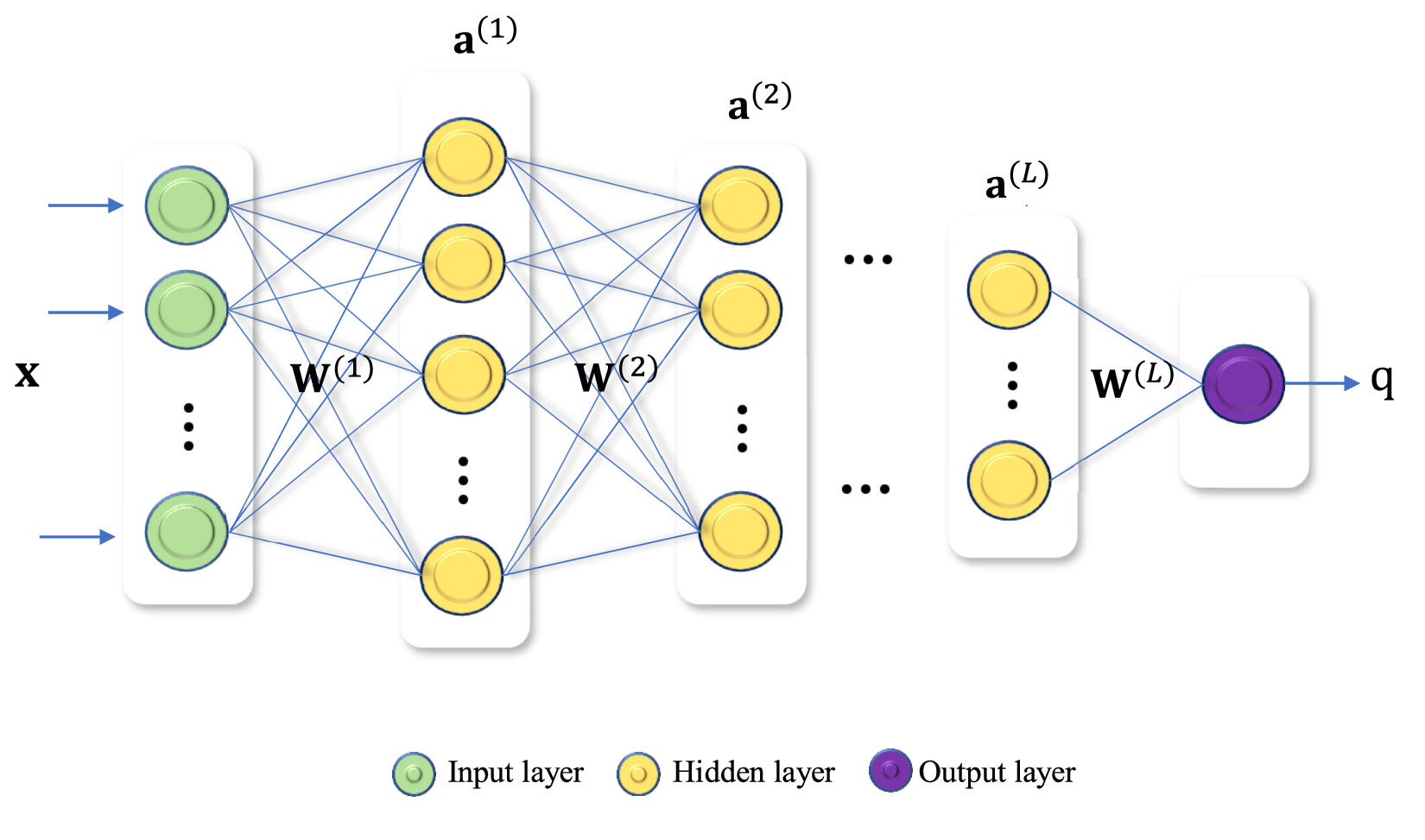}
 	\caption{The deep neural network architecture for AmBC joint channel estimation and signal detection.} \label{DNN_architecture}
\end{figure*}

Let  us  consider the input vector $\boldsymbol{x} \in \mathbb{R}^{4N_T}$ for the DNN, where $N_T = 3N$. This vector includes the real and imaginary parts of the received signal and the original transmitted data, combined with training data. The size of each of these components is $2N_T$. Furthermore, $q \in \mathbb{R}$ denotes the scalar output of the DNN. In this architecture, each layer $l = \{1, \dots, 4\}$ is associated with a weight matrix $\boldsymbol{W}^{(l)} \in \mathbb{R}^{m_l \times m_{l-1}}$ and a bias vector $\boldsymbol{b}^{(l)} \in \mathbb{R}^{m_l}$. The activation function for each layer $l$ is symbolized by $f_l$, while the batch normalization function for each layer $l$ is represented by $g_l$. For each layer $l$, the output can be calculated as
\begin{align}
&\boldsymbol{z}^{(l)} = \boldsymbol{W}^{(l)} g^{(l-1)}(\boldsymbol{a}^{(l-1)}) + \boldsymbol{b}^{(l)}, \quad \boldsymbol{a}^{(l)} = f^{(l)}(\boldsymbol{z}^{(l)}),
\end{align}
where $\boldsymbol{a}^0 = \boldsymbol{x}$ serves as the input vector to the first layer. This computes the activation of each neuron in layer $l$ by taking the weighted sum of the activations from the previous layer, applying the batch normalization function, and passing the result through the activation function.  The final output of the DNN comes from the last layer, denoted as $a^{(L)}$. This represents the probability of each input being classified as either $1$ or $0$, indicating whether the tag is on or off, respectively. In the following, we will introduce the model training framework \cite{Hao_Ye}. In the next section, we will discuss the model training framework in more detail.

\begin{algorithm}[t]
\caption{Deep Learning-Based Signal Detection Algorithm}
\label{alg:DL_signal_detection}
\begin{algorithmic}[1]
\renewcommand{\algorithmicrequire}{\textbf{Input:}}
\renewcommand{\algorithmicensure}{\textbf{Output:}}
\REQUIRE Dataset $D_S = (\boldsymbol{X}_S, Q_S)$ with input-label pairs $(\boldsymbol{x}^{(i)}_S, q^{(i)}_S)$, $i \in \{1, 2, \dots, I_S\}$. Initialize model parameters $\boldsymbol{\phi}$, learning rate $\beta$, batch size $B$, epoch counter $e = 1$, max epochs $E$, and patience $P$ for early stopping. Set $V_{\text{best}}$ to a high value and patience counter $p = 0$.  
 Split dataset into training and validation subsets.
\WHILE{($e \leq E$) and ($p < P$)}
\STATE Shuffle the training set to create random mini-batches of size $B$
\FOR{each mini-batch in the training set}
\STATE Calculate the gradient of the cost function $J(\boldsymbol{\phi})$ in \eqref{J_phi} with respect to $\boldsymbol{\phi}$ using the current mini-batch.
\STATE Update the model parameters $\boldsymbol{\phi}$ by applying the backpropagation algorithm based on the Adam optimizer with learning rate $\beta$.
\ENDFOR
\STATE Evaluate the model performance on the validation set and let $V_{\text{current}}$ be the current validation performance.
\IF{$V_{\text{current}} < V_{\text{best}}$}
\STATE Update $V_{\text{best}} \leftarrow V_{\text{current}}$ and reset the patience counter $p \leftarrow 0$.
\STATE Save the current model parameters as $\boldsymbol{\phi}^*$.
\ELSE
\STATE Increment the patience counter $p \leftarrow p + 1$.
\ENDIF
\STATE Increment the epoch counter $e \leftarrow e + 1$
\STATE  Fine-tune the model by reducing the learning rate $\beta$ if necessary.
\ENDWHILE
\RETURN Model parameters $\boldsymbol{\phi}^*$
\end{algorithmic}
\end{algorithm}

\subsection{Model Training}
The DNN model used for joint Channel Estimation (CE) and data detection follows a two-stage process: (1) offline training and (2) online usage. In the first stage, it is trained using a comprehensive dataset comprising tag symbols. These symbols are collected from a diverse range of information sequences transmitted under various channel conditions, each possessing specific statistical properties. By exposure to this wide array of scenarios, it learns to generalize effectively across different channel conditions.

Once the DNN model is sufficiently trained, it is deployed in the online stage for practical usage. It then takes the received signal as input and generates an output that accurately recovers the transmitted data. The notable advantage of this approach is that it eliminates the explicit need for channel estimation, as the DNN model implicitly learns to perform this task during the training stage. Overall, this approach leverages the power of deep learning to effectively combine channel estimation and data detection, streamlining the process and enhancing overall performance. These two stages are described next. 

Given a dataset $D_S = (\boldsymbol{X}_S, Q_S)$ containing pairs of input data $\boldsymbol{x}^{(i)}_S$ and corresponding labels $q^{(i)}_S$ for $i = \{1, 2, \dots, I_S\}$. We assume that the samples are independent. Let's also assume that the DNN outputs a probability $\hat{q}^{(i)}_S$ for each input $\boldsymbol{x}^{(i)}_S$, representing the likelihood of $\boldsymbol{x}^{(i)}_S$ belonging to class $1$ or $0$. The likelihood function $L( \boldsymbol{\phi})$ denotes the joint probability of observing the labels $q^{(i)}_S$ given the input data $\boldsymbol{x}^{(i)}_S$ and the model parameters $ \boldsymbol{\phi}$ \cite{bishop2006pattern}:
\begin{equation}
\mathcal{L}( \boldsymbol{\phi}) = \prod_{i=1}^{I_S} \text{Pr}(q^{(i)}_S | \boldsymbol{x}^{(i)}_S;  \boldsymbol{\phi}).
\end{equation}
The individual probabilities can be expressed as $\text{Pr}(q^{(i)}_S | \boldsymbol{x}^{(i)}_S;  \boldsymbol{\phi}) = (\hat{q}^{(i)}_S)^{q^{(i)}_S} (1 - \hat{q}^{(i)}_S)^{1 - q^{(i)}_S}$. This equation represents the probability of the true label $q^{(i)}_S$ given the input data $\boldsymbol{x}^{(i)}_S$ and model parameters $\boldsymbol{\phi}$ in a binary classification problem. Here, $\hat{q}^{(i)}_S$ represents the predicted probability of the positive class (class 1) for the $i$-th input sample, and $q^{(i)}_S$ is the true label for the $i$-th input sample, which can be either $0$ or $1$. By substituting this into the likelihood function, we obtain:
\begin{equation}
\mathcal{L}( \boldsymbol{\phi}) = \prod_{k=1}^{I_S} (\hat{q}^{(i)}_S)^{q^{(i)}_S} (1 - \hat{q}^{(i)}_S)^{1 - q^{(i)}_S}.
\end{equation}
It is more practical to work with the log-likelihood function, which is the natural logarithm of the likelihood function:
\begin{equation}\label{l_phi}
\ell(\boldsymbol{\phi}) = \ln \mathcal{L}( \boldsymbol{\phi}) = \sum_{k=1}^{I_S} \left[ q^{(i)}_S \ln \hat{q}^{(i)}_S + (1 - q^{(i)}_S) \ln (1 - \hat{q}^{(i)}_S) \right].
\end{equation}
Indeed, $\ell(\boldsymbol{\phi})$, is used in the context of binary classification problems, as it provides a measure of how well the model predicted probabilities $(\hat{q}^{(i)}_S)$  match the true labels $(q^{(i)}_S)$ of the $i$-th data. To find parameter $\boldsymbol{\phi}$, we need to maximize the conditional PDF given as 
\begin{equation}
  \boldsymbol{\phi}^* = \underset{\boldsymbol{\phi}}{\text{argmax}} ~ \text{Pr}(Q_S | \boldsymbol{X}_S;  \boldsymbol{\phi}),
\end{equation}
which is also equivalent to minimizing the following cost function:
\begin{align}\label{J_phi}
&J(\boldsymbol{\phi}) = -\frac{1}{I_S}  \sum_{k=1}^{I_S} \left[ q^{(i)}_S \ln \hat{q}^{(i)}_S + (1 - q^{(i)}_S) \ln (1 - \hat{q}^{(i)}_S) \right].
\end{align}
During the training process, the DNN model aims to minimize the binary cross-entropy loss by adjusting neuron weights, which ensures that the predicted probabilities are as close as possible to the true class labels. This helps the model learn the optimal parameters $(\boldsymbol{\phi})$ to make accurate predictions. Since the sigmoid activation function is used in the last layer, the output value ranges from $0$ to $1$. To map the output probability to binary class labels ($0$ or $1$), a threshold value is applied to the predicted probability. Typically, the threshold is set at $0.5$. If the output probability is greater than or equal to the threshold (i.e., $\hat{q}^{(i)}_S \geq 0.5$), the input is assigned to class 1. Conversely, if the output probability is less than the threshold (i.e., $\hat{q}^{(i)}_S < 0.5$), the input is assigned to class $0$. This threshold-based approach converts the continuous probability values produced by the sigmoid activation function into discrete binary class labels.

Then, the model parameters are optimized using backpropagation and the Adam optimizer, which adaptively adjusts the learning rate for each parameter based on the first and second moments of the gradients \cite{Diederik}. From the above analysis, the signal detection algorithm utilizing deep learning is outlined in \textbf{Algorithm \ref{alg:DL_signal_detection}}. In addition, the early stopping technique is also used to prevent overfitting, and the learning rate may be adjusted for fine-tuning. The complexity of this deep learning-based signal detection algorithm can be dissected by looking at three crucial factors: the number of epochs ($E$), the number of batches per epoch ($I_S/B$), and the complexity of the backpropagation operation ($\mathcal{O}(N)$), where $N$ represents the total number of parameters in the model. The number of batches per epoch depends on the size of the dataset ($I_S$) and the chosen batch size ($B$). The backpropagation operation, used for updating the model parameters, typically exhibits a complexity proportional to the total number of parameters ($N$) within the model. Consequently, models with a larger number of parameters require a greater amount of computations, thus escalating the time complexity. Given these factors, the algorithm's overall time complexity can be modeled as $\mathcal{O}(E \cdot \frac{I_S}{B} \cdot N)$, indicating that the time complexity increases linearly with the number of epochs, the number of parameters, and inversely with the batch size.

\begin{table}[t]
\renewcommand{\arraystretch}{1.05}
\centering
\caption{Simulation Parameters.}
\label{table-notations}
\begin{tabular}{|l|l|}    
\hline
\textbf{Parameters} & \textbf{Values} \\ \hline \hline
Monte Carlo iterations & \num{1e5} \\ \hline
Speed of light, \(c\) & \qty{3e8}{\m/\s} \\ \hline
Carrier frequency, \(f_c\) & \qty{915}{\mega\hertz} \\ \hline
Tag coefficient, \(\xi\) & 1 \\ \hline
Tag antenna gain, \(G_l\) & \qty{0}{\decibel} \\ \hline
Number of tag bits, \(T_{\text{bit}}\) & 100 \\ \hline
Number of samples, \(N\) & 40 \\ \hline
SNR & \qty{10}{\decibel} \\ \hline
\end{tabular}
\end{table}

\section{Numerical Results}\label{section_V}
In the considered system, three single-antenna nodes are involved, and we set their distances based on prior research \cite{Vincent_Liu,Vikram}. Specifically, the ambient RF source is placed at a distance of $d_{st} = \qty{2.5}{\meter}$ from the tag. The tag itself is positioned at $d_{sr} = \qty{4.8}{\meter}$ from the RF source and at $d_{tr} = \qty{0.5}{\meter}$ from the reader. It is important to note that in Figure 10, we explore various path-loss exponent values, effectively altering the distances. For conciseness, we limit our examination to this specific set of distances and do not report other distance configurations. Furthermore, the reader receives signals from both the direct ambient RF source and the reflected signals from the tag. We compare our  DNN detector against traditional ML, ED, Bayesian,  and GLRT methods.   We also make comparisons to the Gaussian mixture model (GMM), an algorithm using unsupervised learning \cite{Qianqian_Zhang_1}. The GMM leverages the expectation-maximization (EM) algorithm to effectively learn the parameters of the model. It has good robustness and efficiency, particularly for Gaussian data. GMM is used to directly extract energy information from received signals for tag signal detection.  \cite{Qianqian_Zhang_1}. It is assumed that the channel remains constant throughout the $NT_{bit}$ bits \cite{Qianqian_Zhang_1}. Also, the maximum number of iterations for the EM algorithm is set to $1000$.

Following the architectural design (Fig. \ref{DNN_architecture}), the data length is set to $N$, and the pilot number is set to $2N$. The datasets have been partitioned into training, validation, and testing sets, each containing $800,000$, $240,000$, and $240,000$ samples respectively. This allocation ensures a robust base for both model training and performance assessment.  Table \ref{table-notations} summarizes the simulation parameters \cite{Sudarshan_Guruacharya,Vincent_Liu}.

\begin{figure}[t]
\centering
	\includegraphics[width=3.5in]{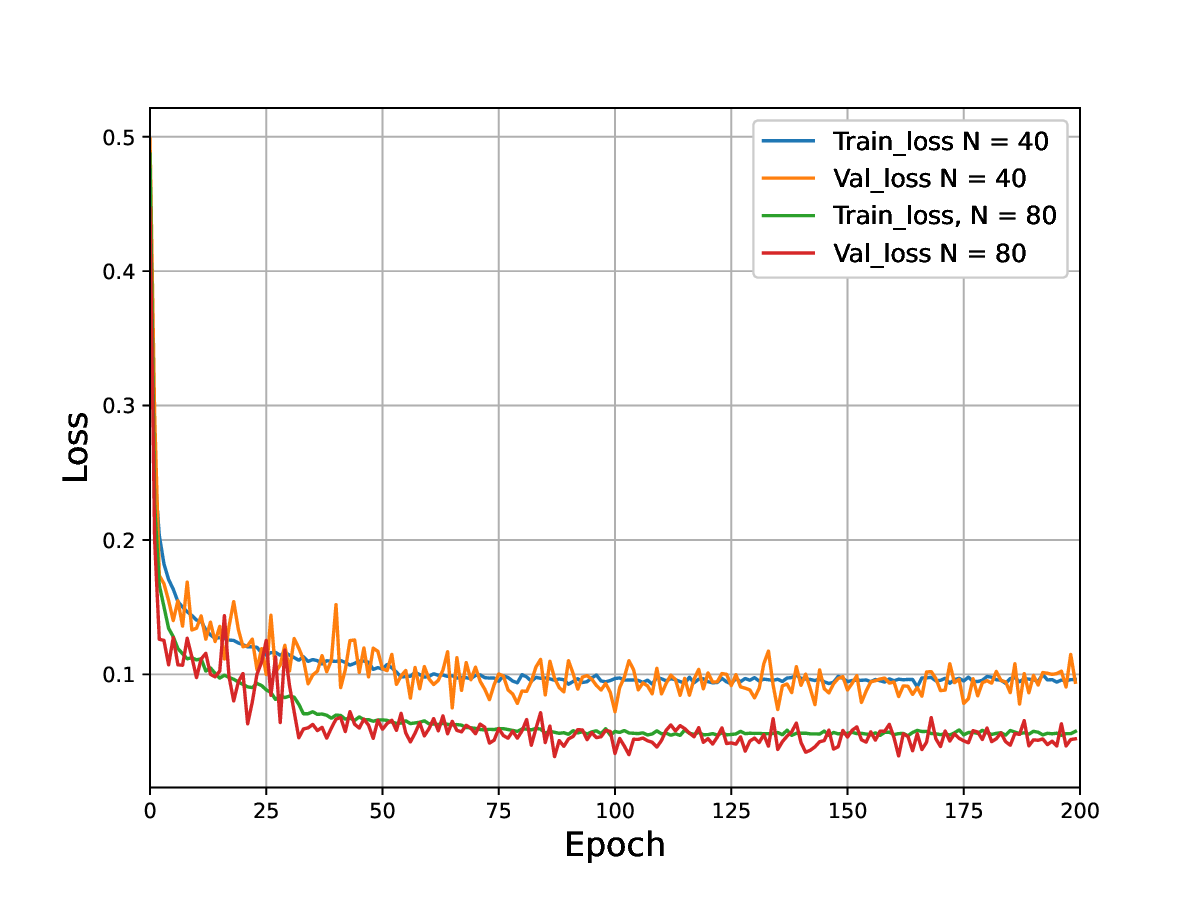}
 	\caption{Training and validation losses.} \label{loss} 
\end{figure}

In Fig. \ref{loss}, we present the training and validation losses as a function of the number of epochs for  $N$ set to $40$ and $80$. The figure illustrates the convergence behavior of the model during the training process over $200$ epochs. As the number of epochs increases, both the training and validation losses decrease, demonstrating the effectiveness of the learning process in optimizing the model parameters. A notable observation from the figure is the impact of increasing $N$ on the loss values. When $N$ increases from $40$ to $80$, both the training and validation losses show a noticeable reduction. This suggests that the model benefits from a larger input size, allowing it to more accurately estimate the channel and detect the data. Consequently, it highlights the selection of an appropriate value for $N$. In both cases, the losses eventually stabilize at values below $0.1$, suggesting that the model achieves satisfactory performance. This figure provides valuable insights into the model learning behavior, highlighting the advantages of using larger $N$ and confirming the convergence of the training and validation losses over the $200$ epochs.

Table \ref{tab:latency_evaluation} presents the learning runtime of our DNN model in relation to data length, both with and without early stopping (ES). The experiment is conducted on a personal computer powered by an Intel\textsuperscript{\textregistered} Xeon\textsuperscript{\textregistered} CPU clocked at 3.5 GHz. The validation loss is monitored for the early stopping mechanism, and the patience level is set at $5$. As data length increases, so does the learning runtime. However, it is evident that employing early stopping can significantly reduce latency. For instance, at a data length of $100$, early stopping reduces the latency from $27.75$ min to $3.33$ min, highlighting its importance in optimizing computational efficiency. 

\begin{table*}[t]
\centering
\caption{Training Time of the DNN Model.}
\begin{tabular}{|c|c|c|}
\hline
\rowcolor[HTML]{EFEFEF}
\textbf{Data Length, $N$} & \textbf{With ES (min)} & \textbf{No ES (min)} \\
\hline
20 & $1.35$ & $16.42$ \\
\hline
40 & $2.08$ & $19.86$ \\
\hline
60 & $2.27$ & $21.84$ \\
\hline
80 & $3.80$ & $25.07$ \\
\hline
100 & $3.33$ & $27.75$ \\
\hline
\end{tabular}
\label{tab:latency_evaluation}
\end{table*}

\begin{figure}[t]
\centering
	\includegraphics[width=3.5in]{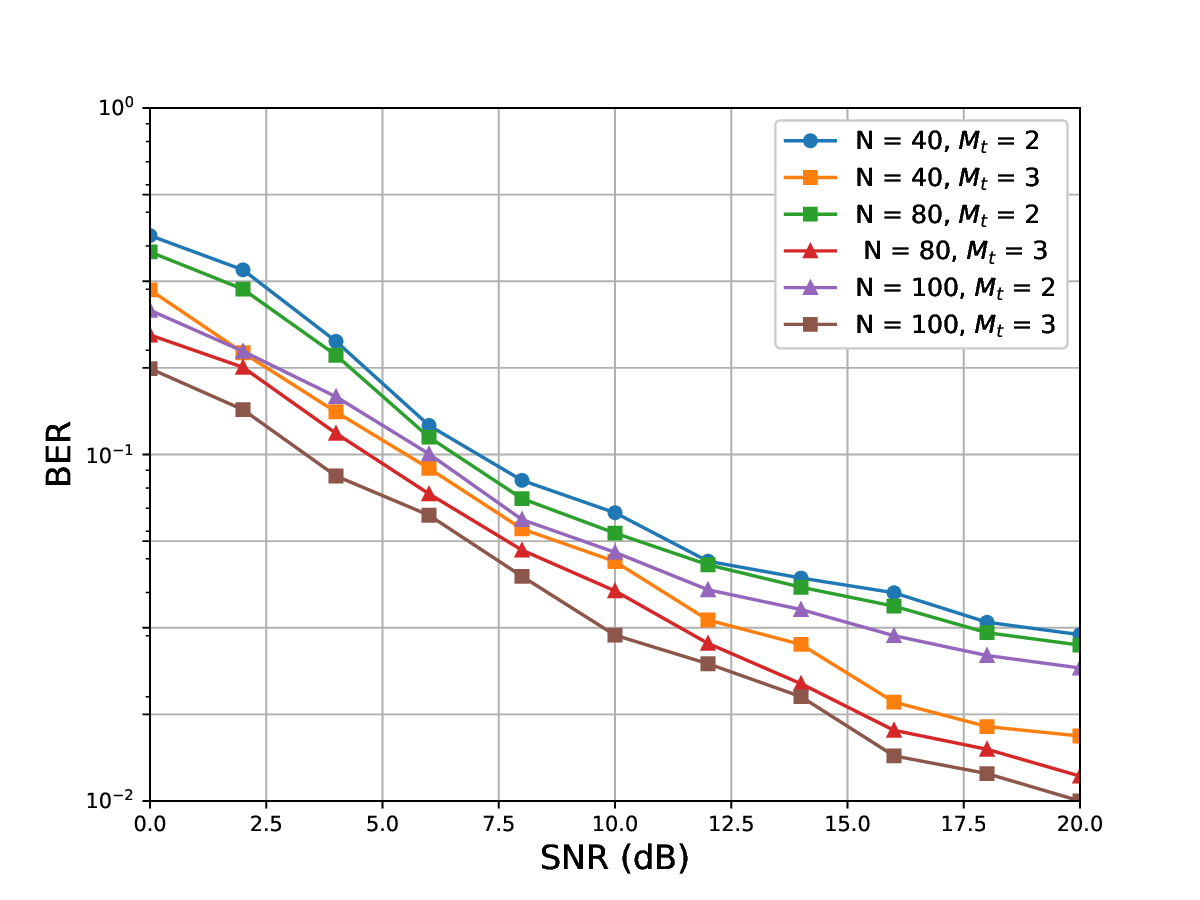}
 	\caption{BER versus SNR for the different number of pilot lengths.} \label{BER_pilot} 
\end{figure}

In Fig. \ref{BER_pilot}, the BER performance of the DNN for joint CE and data detection is plotted against the SNR for various pilot lengths, specifically $M_t = 2$ and $3$. The figure demonstrates a clear trend, where an increase in pilot length leads to better BER performance across the entire SNR range. The pilot symbols serve as a valuable resource for the model to learn the channel characteristics more effectively, and thus, increasing the number of pilot symbols results in more accurate CE. The technical rationale behind this observation can be explained through the increased amount of information available to the DNN for CE when the pilot length is increased. With more pilot symbols, the model is exposed to a larger set of known reference signals transmitted over the channel, allowing it to better understand the underlying channel conditions and adapt accordingly.  Furthermore, as the DNN has a more comprehensive knowledge of the channel conditions, it can more effectively exploit the spatial and temporal correlations present in the wireless communication environment. This exploitation of correlations allows the DNN to make more informed decisions during data detection, thereby reducing the likelihood of errors and ultimately leading to lower BER values. In addition, the enhanced CE achieved through the use of longer pilot sequences also contributes to improved equalization and interference mitigation. As the DNN becomes more adept at discerning the channel nuances, it can more effectively filter out noise and interference, further bolstering the accuracy of data detection.

\begin{figure}[t]
\centering
	\includegraphics[width=3.5in]{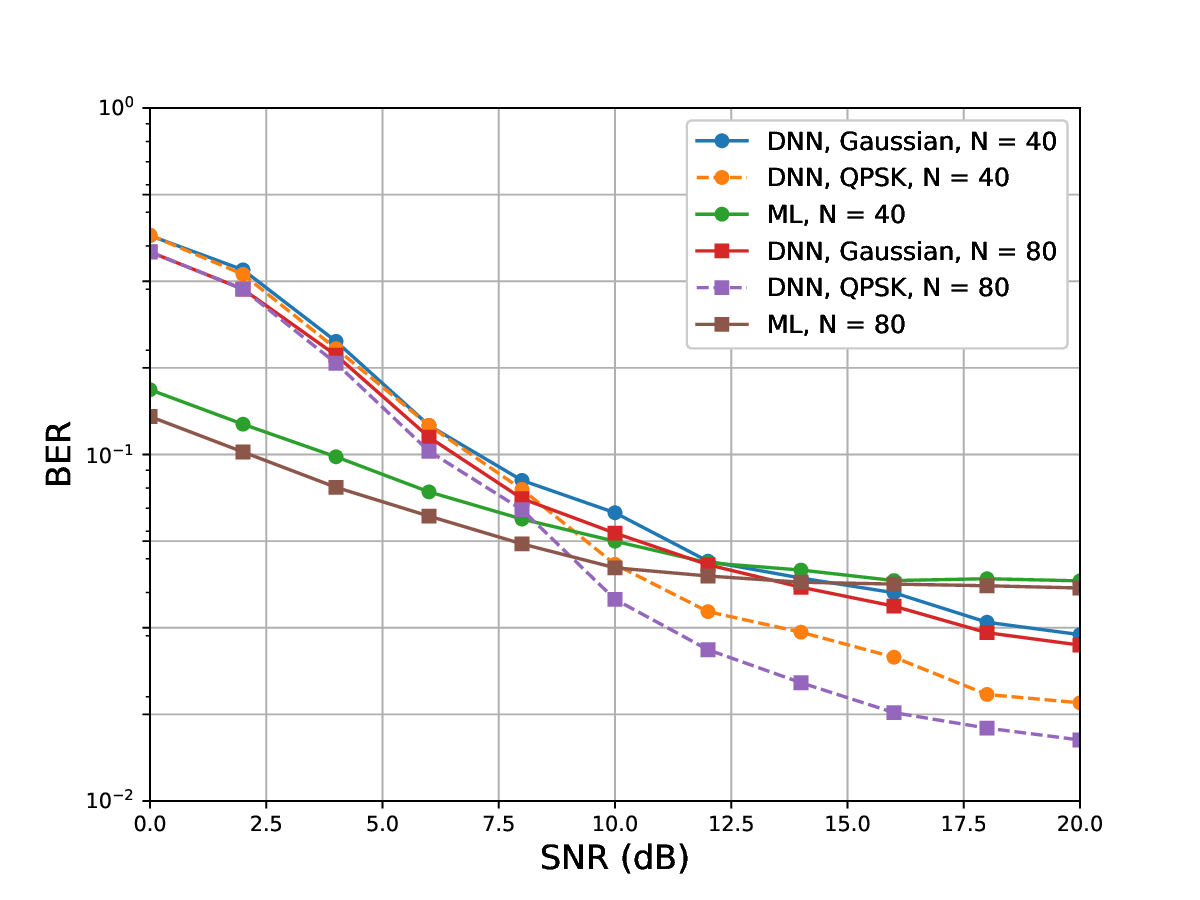}
 	\caption{BER versus SNR for the different number of observation lengths.} \label{BER_N} 
\end{figure}

Fig. \ref{BER_N} showcases the BER performance against SNR for DNN and ML detectors with varying input sizes $N$. This figure demonstrates the effect of increasing $N$ on the BER performance, as well as a comparison between the DNN and ML detectors at various SNR levels. 
The BER results of the introduced algorithms indicate a slightly poorer performance under the complex Gaussian ambient RF source compared to the quadrature phase shift keying (QPSK) ambient one. This is because it is harder to differentiate between the two scenarios in the former situation. It can be observed that the ML detector plateaus at higher SNRs for all $N$ values, indicating a performance limit. This plateau can be attributed to the ML detector's sensitivity to noise and interference, which becomes more noticeable at higher SNRs. However, at lower SNR values, the ML detector exhibits superior performance compared to the DNN. This is because the ML detector likelihood-based approach is more robust to noise at lower SNR levels, leading to better data detection accuracy. The impact of increasing $N$ on the BER performance can also be observed. For both DNN and ML detectors, the BER is improved when the input size $N$ is increased. As $N$ increases, the detectors can better differentiate between the hypotheses, resulting in a lower BER. This improved hypothesis testing can be explained by the fact that a larger input size allows the detectors to observe a greater number of signal samples, which aids in reducing the uncertainty in the decision-making process. This reduced uncertainty leads to a more accurate estimation of the transmitted symbols, ultimately resulting in a lower BER. However, larger input sizes can be used to capture spatial and temporal correlations within the received signal, which can be exploited by DNNs to enhance their estimation and detection capabilities. 

\begin{figure}[t]
\centering
	\includegraphics[width=3.5in]{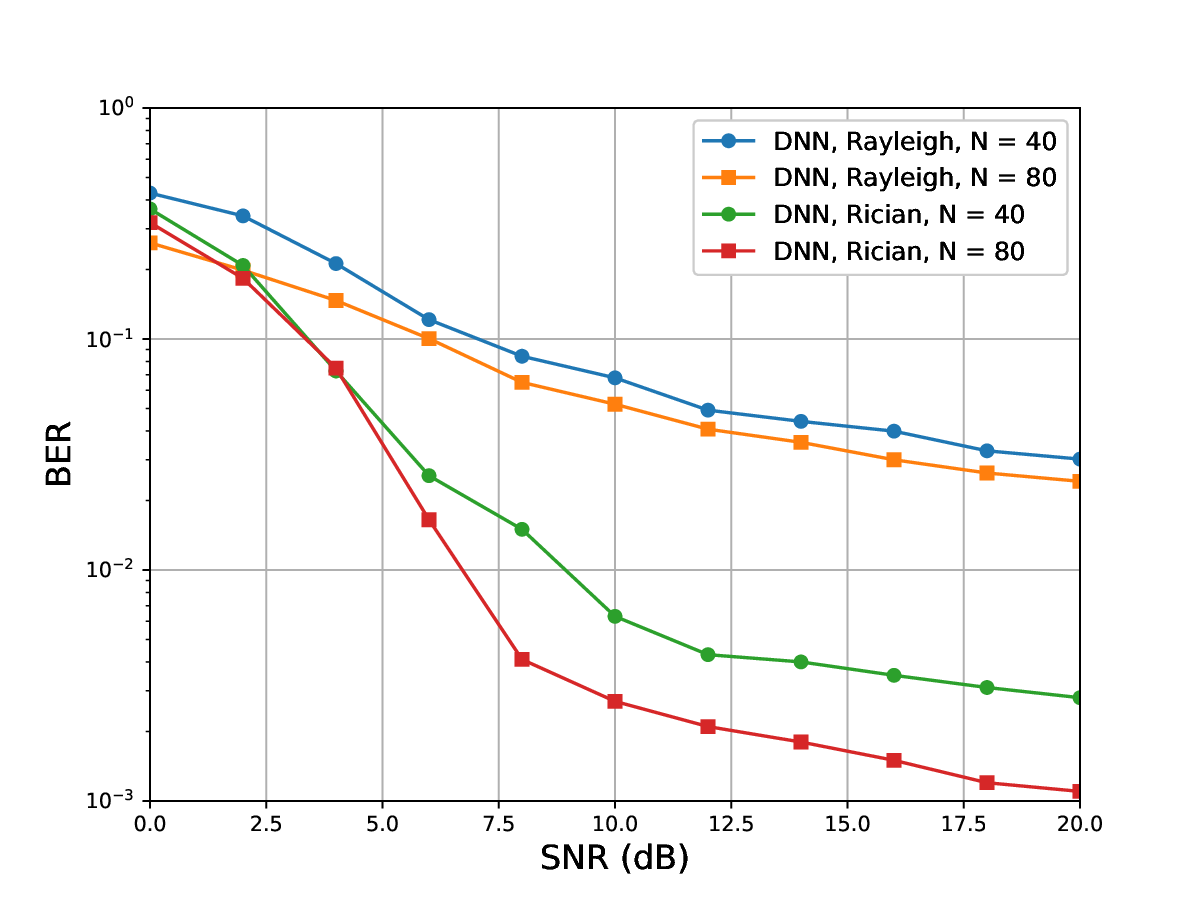}
 	\caption{BER versus SNR for the different channel models.} \label{BER_N_ric} 
\end{figure}

In Fig. \ref{BER_N_ric}, the performance evaluation under varying channel models is depicted, focusing specifically on the Rician fading model with a Rician factor of $\kappa = 3$. The integration of this model becomes indispensable in environments characterized by a pronounced line-of-sight (LoS) path, offering a stark contrast to the Rayleigh model, which predominantly encapsulates scenarios of multi-path propagation absent of a direct path. Due to the existence of the LoS component, the BER is significantly reduced. This reduction in BER is mainly due to reduced vulnerability to interference and fading in the Rician model's direct LoS path. This provides a resilient communication link, capable of mitigating diverse transmission impediments.

\begin{figure}[t]
\centering
	\includegraphics[width=3.5in]{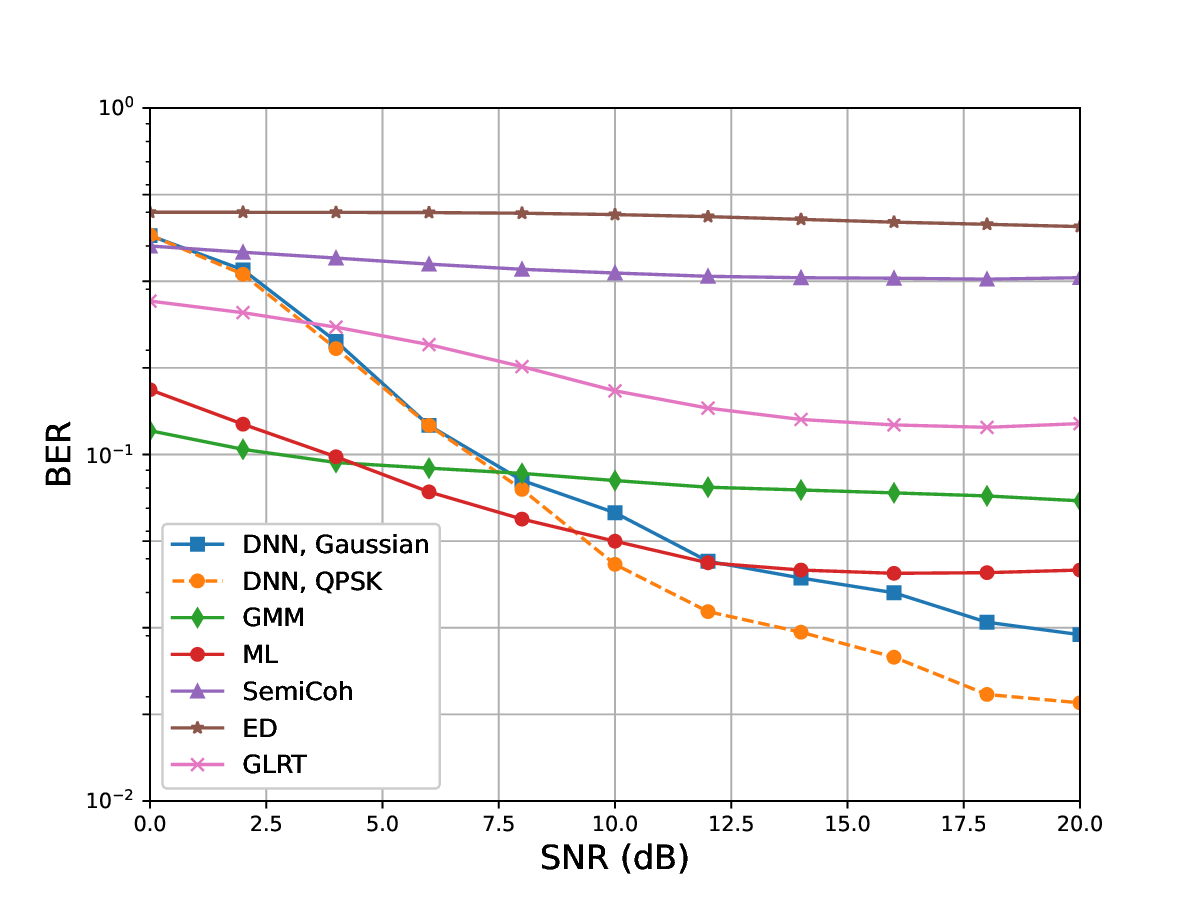}
 	\caption{BER versus SNR} \label{BER_snr} 
\end{figure}

Figure \ref{BER_snr} depicts the BER performance of the detectors across various Signal-to-Noise Ratio (SNR) levels. As expected, the BER decreases as the SNR increases. Comparatively, the SemiCoh detector exhibits a higher BER compared to the ML detector. The ML detector outperforms classical detectors, showcasing superior performance as the SNR increases. However, it reaches a plateau when the SNR becomes relatively large, primarily due to direct link interference. On the other hand, the Bayesian detector, being a more advanced approach compared to the ED, effectively utilizes channel knowledge to enhance detection performance. It employs a maximum log-likelihood approach to estimate channel parameters before conducting the hypothesis test. However, this method requires accurate knowledge of transmit signal statistics and is sensitive to uncertainties, leading to diminished detection performance. Our proposed method outperforms GMM, as the relationships between features are complex and non-linear.

Notably, the DNN demonstrates exceptional performance compared to classical detectors, exhibiting superiority starting from an SNR of $7$ dB and even surpassing the ML detector when the SNR exceeds $12.5$ dB. While ML is considered the optimal approach, it still incorporates interference from the RF source, leading to saturation even with perfect CSI. However, our proposed deep-learning approach excels at extracting complex patterns and improving channel estimation. This finding highlights the effectiveness of the DNN in distinguishing between hypotheses, even in the presence of direct link interference. For instance, at an SNR of $10$ dB, the proposed DNN method achieves a BER of $0.0679$, while the ML detector achieves a BER of $0.0562$, showcasing the advantages of our proposed approach. While it is observed that the DNN has slightly worse BER than the  ML detector, it is important to note that the DNN does not require  CSI. The ML detector, on the other hand, requires perfect channel estimation. Importantly,  DNN executes channel estimation and data detection concurrently.

\begin{figure}[t]
\centering
	\includegraphics[width=3.5in]{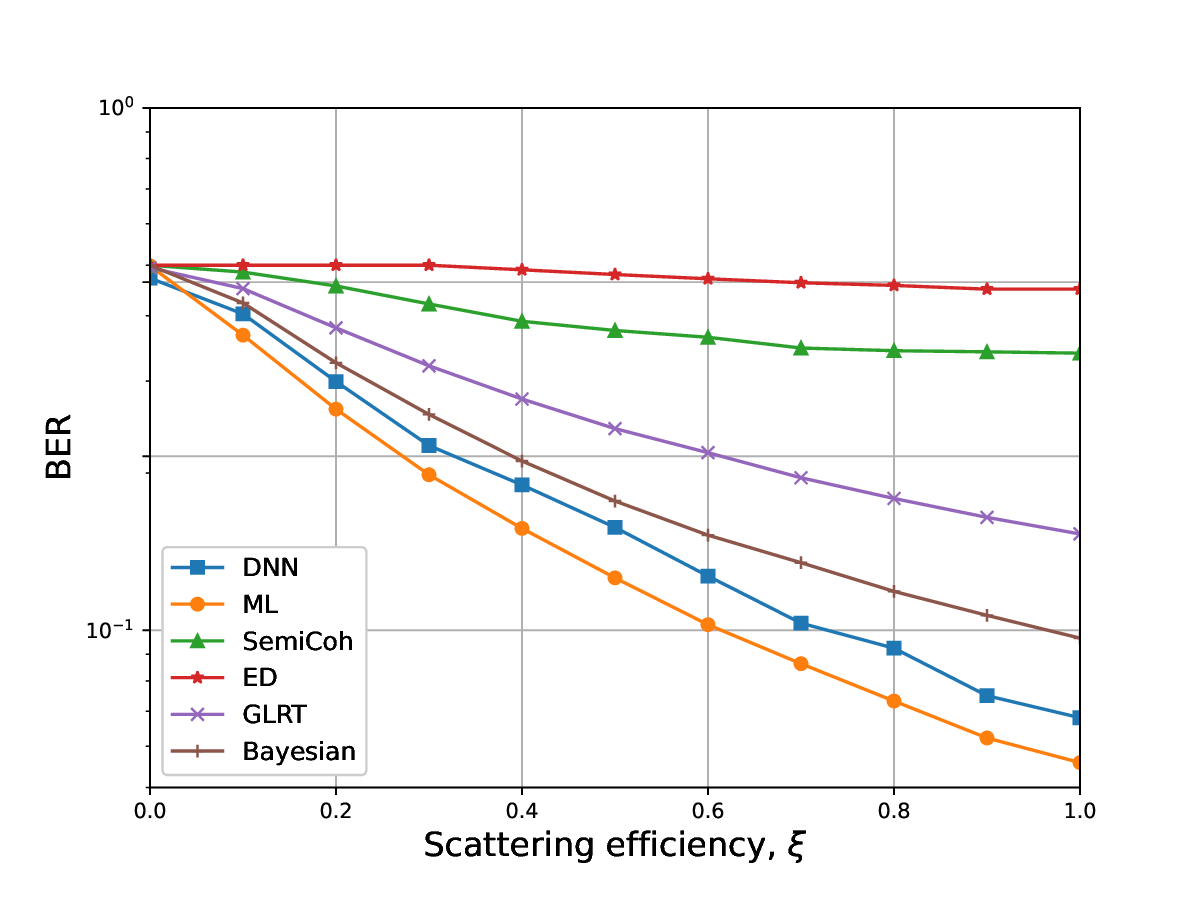}
 	\caption{BER versus scattering efficiency of the tag, $\xi$.} \label{BER_alpha} 
\end{figure}

Fig. \ref{BER_alpha} illustrates the effect of scattering efficiency, $\xi$, on the BER performance of detection schemes, as well as a DNN detector. The figure shows that the BER consistently decreases as $\xi$ increases for all detection schemes under consideration. This can be attributed to the fact that higher scattering efficiency leads to improved signal quality, which in turn enhances the detection capabilities of the detectors.
When $\xi$ is small, the performance of the three proposed detectors (ED, GLRT, and SemiCoh) is relatively similar. This is because, under low scattering efficiency conditions, the received signal quality is poor, making it challenging for the detectors to differentiate between the transmitted symbols. However, as $\xi$ increases, the Bayesian detector exhibits a significantly lower BER than the ED, GLRT, and SemiCoh detectors, indicating its superior performance in scenarios with higher scattering efficiency. This superiority can be attributed to the Bayesian detector's ability to exploit prior knowledge and make more informed decisions during data detection. Furthermore, the performance gap of the ED and SemiCoh detector compared to other detectors also widens, revealing a more pronounced difference in their detection capabilities as $\xi$ increases. This widening gap can be attributed to the inherent limitations of the ED and SemiCoh detectors, which are less capable of handling complex channel conditions and exploiting the benefits of increased scattering efficiency. In addition to these observations, the figure also highlights that the ML detector outperforms all other detectors under consideration, emphasizing its robustness in various scattering conditions. Notably, the DNN detector demonstrates performance close to that of the ML detector. This comparable performance can be attributed to the DNN's ability to learn and adapt to the underlying channel conditions, making it a promising alternative to traditional ML-based detection schemes.

\begin{figure}[t]
\centering
	\includegraphics[width=3.5in]{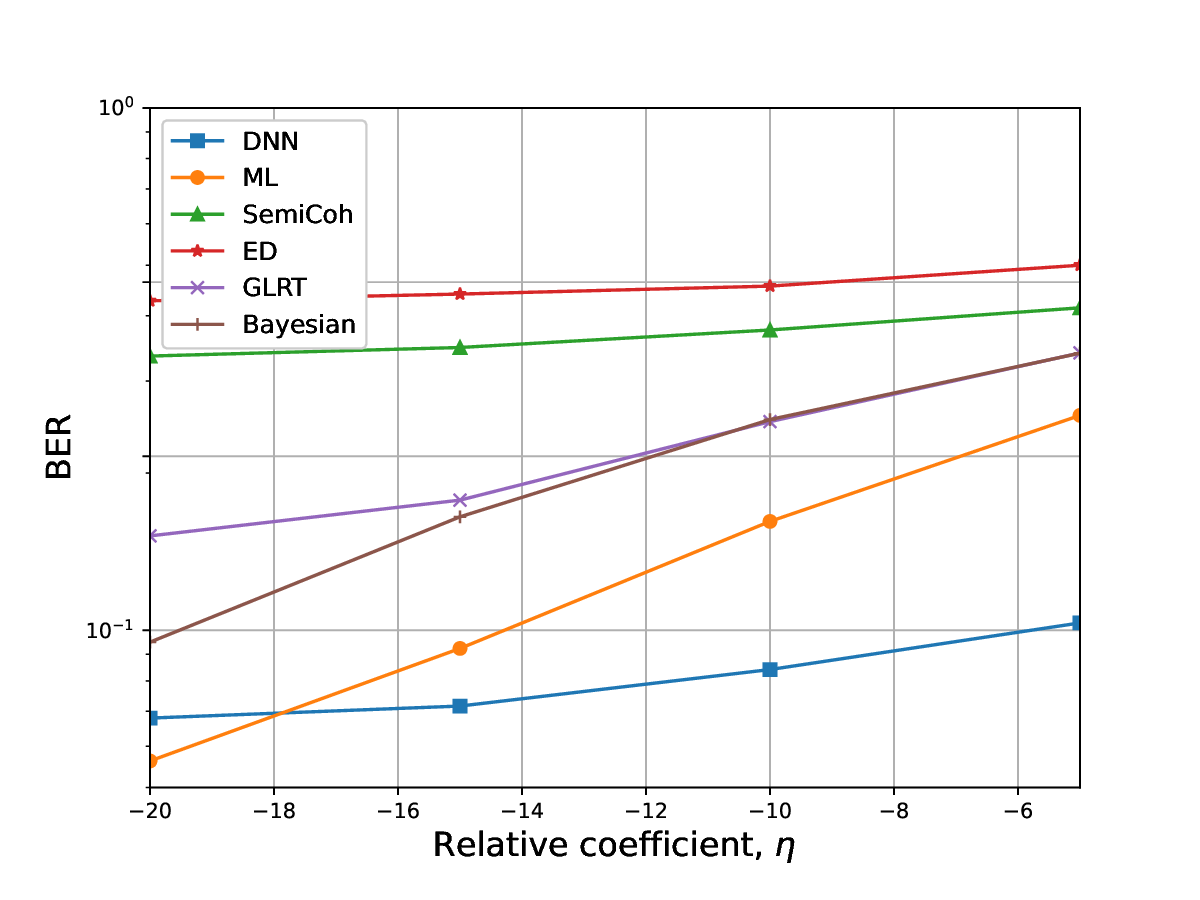}
 	\caption{BER versus relative coefficient between the backscattered signal path and the direct signal path, $\eta$.} \label{BER_zeta} 
\end{figure}

Fig. \ref{BER_zeta} demonstrates the relationship between the BER and the relative coefficient between the backscattered signal path and the direct signal path, $\eta$. The figure reveals that as $\eta$ increases, the BER also increases for all detection schemes. This behavior can be attributed to the fact that as the direct channel link becomes more comparable to the reflected channels, the detection process becomes more challenging. This is due to the growing interference caused by the direct signal path, which can hinder the detectors' ability to accurately estimate the transmitted symbols. Despite this increase in BER, the DNN detector still exhibits superior performance compared to all other detectors, highlighting its robustness and effectiveness in handling challenging detection scenarios. The DNN's strong performance allows it to better cope with the effects of increasing $\eta$. Moreover, the Bayesian and GLRT detectors demonstrate converging performance trends as $\eta$ increases. This convergence suggests that the Bayesian and GLRT detectors may share some similar underlying principles or their performance may be similarly affected by the increasing impact of the direct signal path. One possible explanation for the convergence of the Bayesian and GLRT detectors' performance could be related to their utilization of channel knowledge. As $\eta$ increases, the direct channel link becomes more prominent, and the reflected channels become less distinguishable. Consequently, the additional channel knowledge that the Bayesian and GLRT detectors rely on becomes less informative, resulting in performance degradation.

\section{Conclusion}\label{section_VI}
In this paper, we developed a DNN-based approach for signal detection and channel estimation  \abc systems.  The developed approach benefits from a larger input size and an increase in pilot length leads to better BER performance. Further, we observed superior performance compared to classical detectors, even surpassing the ML detector. The BER performance of the detectors is affected by factors such as scattering efficiency and the relative coefficient between the backscattered signal path and the direct signal path.  The superior performance of the DNN detector highlights its potential for  \abc  data detection, providing a viable alternative to traditional methods. Future works include exploring different neural network architectures, enhancing robustness to varying channel conditions, exploiting multi-antenna systems, and investigating DLI removal.

\appendix

\subsection{Proof of Proposition 1}\label{app_1}
In the AmBC systems, the squared magnitude of channel coefficients can be denoted as $v_0=|h_0|^2$ and $v_1=|h_1|^2$. The PDF of $v_0$ is an  exponential distribution given by
\begin{equation}\label{v_0}
    \text{Pr}(v_0) = \frac{1}{\sigma^2_{sr}} e^{-\frac{v_0}{\sigma^2_{sr}}}.
\end{equation}
On the other hand, the PDF of $v_1$ can be written as
\begin{equation}\label{v_1}
     \text{Pr}(v_1) = \frac{1}{\xi\sigma^2_{st}\sigma^2_{tr}} e^{\frac{\sigma^2_{sr}}{\xi\sigma^2_{st}\sigma^2_{tr}}} \mathcal{I}_1 \left(v_1; \sigma_{sr}^2, \xi \sigma_{st}^2 \sigma_{tr}^2 \right).
\end{equation}
We define the integral function $I_L(z; a, b)$ as follows \cite{Guruacharya_2}:
\begin{equation}
I_L(r; a, b) = \int_0^\infty u^L \exp \left( -\left( \frac{r}{u} + \frac{u}{b} \right) \right) \mathrm{d} u,
\end{equation}
where $r \geq 0$, $a \geq 0$, and $b \geq 0$.  Here, it is assumed that $L$ is a positive integer \cite{Guruacharya_2,Sudarshan_Guruacharya}, although it can be any real number. The conditional PDF of $y(n)$ given $v_0$ and $v_1$ is given by
\begin{align}
    &\text{Pr}\left(\mathbf{y}|v_i \right) = \frac{1}{({\pi (v_i P_s +\sigma^2_w)})^{N}} e^{-\frac{\|\mathbf{y}\|^2}{v_i P_s +\sigma^2_w}}.
\end{align}
The likelihood function of the received signal under $\mathcal{H}_0$ and $\mathcal{H}_1$ can be obtained by integrating out the nuisance parameters $v_0$ and $v_1$, and the posterior probability can be calculated using Bayes' theorem. Given the observation vector $\mathbf{y}$, the Bayesian test is  defined as follows \cite{Sudarshan_Guruacharya}:
\begin{equation}\label{L_bays}
L_\text{Baysian}(\mathbf{y}) \overset{\Delta}{=} \frac{\int_{0}^{\infty} \text{Pr}(\mathbf{y}|v_1)\text{Pr}(v_1) \mathrm{d}v_1}{\int_{0}^{\infty} \text{Pr}(\mathbf{y}| v_0)\text{Pr}(v_0)\mathrm{d}v_0}  \vc{\gtreqless}{\mathcal{H}_1}{\mathcal{H}_0} 1.
\end{equation}
In particular, the numerator and denominator of the Bayesian test can be written as  
\begin{align}
&\int_{0}^{\infty} \text{Pr}(\mathbf{y}| v_0)\text{Pr}(v_0)\mathrm{d}v_0 = K_0 \mathcal{I}_N(z; \sigma_w^2, \sigma_{sr}^2 P_s),\nonumber \\
&\int_{0}^{\infty} \text{Pr}(\mathbf{y}|v_1)\text{Pr}(v_1)\mathrm{d}v_1 = K_1 \int_{\sigma_w^2}^\infty \frac{e^{-z / t} }{t^N} \nonumber \\
&\hspace{40mm} \times \mathcal{I}_1 \left(\frac{t - \sigma_w^2 }{P_s}; \sigma_{sr}^2, \xi \sigma_{st}^2 \sigma_{tr}^2 \right) \mathrm{d} t,
\end{align}
where $K_0 = {\text{exp}\left({\frac{\sigma_w^2}{\sigma_{sr}^2 P_s}}\right) }/{\pi^N\sigma_{sr}^2P_s}$ and $K_1 = {\text{exp}\left({\frac{\sigma_{sr}^2}{\xi\sigma_{st}^2\sigma_{tr}^2}}\right)}/{\pi^N\xi\sigma_{st}^2\sigma_{tr}^2 P_s}$.  
Subsequently, we can simplify the test statistics in \eqref{L_bays} as follows:
\begin{align}
  L_\text{Baysian}(z) &= \log   \int_{\sigma_w^2}^{\infty} \frac{e^{-z/t}}{t^N} \mathcal{I}_1  \left(\frac{t - \sigma_w^2}{P_s}; \sigma_{sr}^2, \xi  \sigma_{st}^2 \sigma_{tr}^2 \right) \mathrm{d} t\nonumber\\
  &- \log \mathcal{I}_N(z; \sigma_w^2, \sigma_{sr}^2 P_s), 
\end{align}
with an optimal decision threshold $\Theta_{\text{Bayesian}}^{\text{Th}} = \log(\frac{K_0}{K_1})$.

\bibliographystyle{ieeetr}
\bibliography{ref}

\begin{thebibliography}{10}

\bibitem{Dinh_Nguyen}
D.~C. Nguyen, M.~Ding, P.~N. Pathirana, A.~Seneviratne, J.~Li, D.~Niyato, O.~A. Dobre, and H.~V. Poor, ``{6G} {Internet of Things}: {A} comprehensive survey,'' {\em {IEEE} Internet Things J.}, vol.~9, pp.~359--383, Jan 2022.

\bibitem{3gpp-tsg-95e}
``{3GPP TSG Meeting \#95e, SP-220085, Electronic Meeting, 14 - 24 March 2022}.'' \url{http://www.3gpp.org/ftp/tsg_sa/WG1_Serv/TSGS1_95e/Report/TSGS1_95e_report.zip}, 2022.
\newblock Accessed on: May 11, 2023.

\bibitem{3gpp-ran-rel-18}
``{3GPP TSG RAN Rel-18 workshop, RWS-210453, Electronic Meeting, June 28 – July 2, 2021}.'' \url{http://www.3gpp.org/ftp/tsg_ran/WG1_RL1/TSGR1_103/docs/}, 2021.
\newblock Accessed on: May 11, 2023.

\bibitem{Jiang2018}
X.~Lu, H.~Jiang, D.~Niyato, D.~I. Kim, and Z.~Han, ``Wireless-powered device-to-device communications with ambient backscattering: Performance modeling and analysis,'' {\em {IEEE} Trans. Wirel. Commun.}, vol.~17, pp.~1528--1544, March 2018.

\bibitem{Fatemeh_Rezaei}
F.~Rezaei, C.~Tellambura, and S.~P. Herath, ``Large-scale wireless-powered networks with backscatter communications - {A} comprehensive survey,'' {\em {IEEE} Open J. Commun. Soc.}, vol.~1, pp.~1100--1130, 2020.

\bibitem{Nguyen_Van}
N.~V. Huynh, D.~T. Hoang, X.~Lu, D.~Niyato, P.~Wang, and D.~I. Kim, ``Ambient backscatter communications: {A} contemporary survey,'' {\em {IEEE} Commun. Surv. Tutorials}, vol.~20, pp.~2889--2922, May 2018.

\bibitem{Rezaei2023}
F.~Rezaei, D.~Galappaththige, C.~Tellambura, and S.~Herath, ``Coding techniques for backscatter communications - {A} contemporary survey,'' {\em {IEEE} Commun. Surveys Tuts.}, pp.~1--1, 2023.

\bibitem{Galappaththige2022}
D.~Galappaththige, F.~Rezaei, C.~Tellambura, and S.~Herath, ``Link budget analysis for backscatter-based passive {IoT},'' {\em {IEEE} Access}, vol.~10, pp.~128890--128922, 2022.

\bibitem{Azar_Hakimi_1}
A.~Hakimi, S.~Zargari, C.~Tellambura, and S.~P. Herath, ``Sum rate maximization of {MIMO} monostatic backscatter networks by suppressing residual self-interference,'' {\em {IEEE} Trans. Commun.}, vol.~71, pp.~512--526, Jan. 2023.

\bibitem{3gpp-tr-22.840}
{3GPP}, ``{3GPP TR 22.840 V0.1.0: Study on Energy Efficiency Enhancements for 5G and Beyond},'' {Technical Report} {22.840}, {3GPP}, May 2022.

\bibitem{Donatella_Darsena}
D.~Darsena, G.~Gelli, and F.~Verde, ``Joint channel estimation, interference cancellation, and data detection for ambient backscatter communications,'' in {\em 19th {IEEE} International Workshop on Signal Processing Advances in Wireless Communications ({SPAWC}), Kalamata, Greece, June 25-28}, pp.~1--5, 2018.

\bibitem{Chong_Zhang}
C.~Zhang, G.~Wang, P.~D. Diamantoulakis, F.~Gao, and G.~K. Karagiannidis, ``Interference-free transceiver design and signal detection for ambient backscatter communication systems over frequency-selective channels,'' {\em CoRR}, vol.~abs/1812.11278, 2018.

\bibitem{Morteza_Tavana}
M.~Tavana, A.~Rahmati, V.~Shah{-}Mansouri, and B.~Maham, ``Cooperative sensing with joint energy and correlation detection in cognitive radio networks,'' {\em {IEEE} Commun. Lett.}, vol.~21, pp.~132--135, Jan. 2017.

\bibitem{Yunkai_Hu_2}
Y.~Hu, P.~Wang, Z.~Lin, M.~Ding, and Y.~Liang, ``Machine learning based signal detection for ambient backscatter communications,'' in {\em 2019 {IEEE} International Conference on Communications ({ICC}), Shanghai, China, May 20-24}, pp.~1--6, 2019.

\bibitem{Jing_Qian_2}
J.~Qian, F.~Gao, G.~Wang, S.~Jin, and H.~Zhu, ``Noncoherent detections for ambient backscatter system,'' {\em {IEEE} Trans. Wirel. Commun.}, vol.~16, pp.~1412--1422, March 2017.

\bibitem{Jing_Qian}
J.~Qian, F.~Gao, G.~Wang, S.~Jin, and H.~Zhu, ``Semi-coherent detection and performance analysis for ambient backscatter system,'' {\em {IEEE} Trans. Commun.}, vol.~65, pp.~5266--5279, Dec. 2017.

\bibitem{Tao2020}
Q.~Tao, C.~Zhong, X.~Chen, H.~Lin, and Z.~Zhang, ``Optimal detection for ambient backscatter communication systems with multiantenna reader under complex {Gaussian} illuminator,'' {\em {IEEE} Internet Things J.}, vol.~7, pp.~11371--11383, Dec. 2020.

\bibitem{xu2019sixty}
C.~Xu, N.~Ishikawa, R.~Rajashekar, S.~Sugiura, R.~G. Maunder, Z.~Wang, L.-L. Yang, and L.~Hanzo, ``Sixty years of coherent versus non-coherent tradeoffs and the road from {5G} to wireless futures,'' {\em IEEE Access}, vol.~7, pp.~178246--178299, Dec. 2019.

\bibitem{Jing_Qian_4}
J.~Qian, F.~Gao, G.~Wang, and S.~Jin, ``Symbol detection and performance analysis of the ambient backscatter system,'' in {\em International Conference on Communications ({ICCC}), Chengdu, China, July 27-29}, pp.~1--6, 2016.

\bibitem{Sudarshan_Guruacharya}
S.~Guruacharya, X.~Lu, and E.~Hossain, ``Optimal non-coherent detector for ambient backscatter communication system,'' {\em {IEEE} Trans. Veh. Technol.}, vol.~69, pp.~16197--16201, Dec. 2020.

\bibitem{Wang2016}
G.~{Wang}, F.~{Gao}, R.~{Fan}, and C.~{Tellambura}, ``Ambient backscatter communication systems: Detection and performance analysis,'' {\em IEEE Trans. Commun.}, vol.~64, pp.~4836--4846, Nov. 2016.

\bibitem{Kartheek_Devineni_2}
J.~K. Devineni and H.~S. Dhillon, ``Non-coherent detection and bit error rate for an ambient backscatter link in time-selective fading,'' {\em {IEEE} Trans. Commun.}, vol.~69, pp.~602--618, Jan. 2021.

\bibitem{Mohamed_ElMossallamy_3}
M.~A. ElMossallamy, M.~Pan, R.~J{\"{a}}ntti, K.~G. Seddik, G.~Y. Li, and Z.~Han, ``Noncoherent backscatter communications over ambient {OFDM} signals,'' {\em {IEEE} Trans. Commun.}, vol.~67, no.~5, pp.~3597--3611, 2019.

\bibitem{Youyou_Zhang}
Y.~Zhang, Q.~Zhang, Y.~Liang, and P.~Y. Kam, ``A semi-blind receiver for ambient backscatter communications with {MPSK} {RF} source,'' in {\em 2019 {IEEE/CIC} International Conference on Communications in China - Workshops, {ICCC} Workshops 2019, Changchun, China, August 11-13, 2019}, pp.~71--76, {IEEE}, 2019.

\bibitem{Qianqian_Zhang_1}
Q.~Zhang and Y.~Liang, ``Signal detection for ambient backscatter communications using unsupervised learning,'' in {\em 2017 {IEEE} Globecom Workshops, Singapore, December 4-8}, pp.~1--6, 2017.

\bibitem{Xiyu_Wang}
X.~Wang, R.~Duan, H.~Yigitler, E.~Y. Menta, and R.~J{\"{a}}ntti, ``Machine learning-assisted detection for {BPSK}-modulated ambient backscatter communication systems,'' in {\em 2019 {IEEE} Global Communications Conference ({GLOBECOM}), Waikoloa, HI, USA, Dec. 9-13}, pp.~1--6, 2019.

\bibitem{Chang_Liu}
C.~Liu, Z.~Wei, D.~W.~K. Ng, J.~Yuan, and Y.~Liang, ``Deep transfer learning for signal detection in ambient backscatter communications,'' {\em {IEEE} Trans. Wirel. Commun.}, vol.~20, pp.~1624--1638, March 2021.

\bibitem{Qianqian_Zhang}
Q.~Zhang, H.~Guo, Y.~Liang, and X.~Yuan, ``Constellation learning-based signal detection for ambient backscatter communication systems,'' {\em {IEEE} J. Sel. Areas Commun.}, vol.~37, no.~2, pp.~452--463, 2019.

\bibitem{GuoZXL19}
H.~Guo, Q.~Zhang, S.~Xiao, and Y.~Liang, ``Exploiting multiple antennas for cognitive ambient backscatter communication,'' {\em {IEEE} Internet Things J.}, vol.~6, no.~1, pp.~765--775, 2019.

\bibitem{Braud2021}
T.~Braud, D.~Chatzopoulos, and P.~Hui, {\em Machine Type Communications in {6G}}, pp.~207--231.
\newblock Cham: Springer International Publishing, 2021.

\bibitem{Schmidhuber}
J.~Schmidhuber, ``Deep learning in neural networks: An overview,'' {\em Neural Networks}, vol.~61, pp.~85--117, 2015.

\bibitem{Hao_Ye}
H.~Ye, G.~Y. Li, and B.~Juang, ``Power of deep learning for channel estimation and signal detection in {OFDM} systems,'' {\em {IEEE} Wirel. Commun. Lett.}, vol.~7, pp.~114--117, Feb. 2018.

\bibitem{Xiang}
L.~Xiang, Y.~Liu, T.~Van~Luong, R.~G. Maunder, L.-L. Yang, and L.~Hanzo, ``Deep-learning-aided joint channel estimation and data detection for spatial modulation,'' {\em IEEE Access}, vol.~8, pp.~191910--191919, 2020.

\bibitem{ReedPH21}
R.~Reed, F.~L. Pour, and D.~S. Ha, ``An energy efficient {RF} backscatter modulator for {IoT} applications,'' in {\em {IEEE} International Symposium on Circuits and Systems, {ISCAS} 2021, Daegu, South Korea, May 22-28}, pp.~1--5, 2021.

\bibitem{Azar_Hakimi}
A.~Hakimi, S.~Zargari, C.~Tellambura, and S.~P. Herath, ``Sum rate maximization of full-duplex {MIMO} monostatic backscatter networks under residual self-interference,'' in {\em 17th Canadian Workshop on Information Theory ({CWIT}), Ottawa, ON, Canada, June 5-8}, pp.~103--108, 2022.

\bibitem{finkenzeller2010rfid}
K.~Finkenzeller, {\em {RFID} Handbook: Fundamentals and Applications in Contactless Smart Cards, Radio Frequency Identification and Near-Field Communication}.
\newblock Hoboken, NJ, USA: Wiley, 2010.

\bibitem{Lu2022}
X.~Lu, N.~Cong~Luong, D.~T. Hoang, D.~Niyato, Y.~Xiao, and P.~Wang, ``Secure wirelessly powered networks at the physical layer: Challenges, countermeasures, and road ahead,'' vol.~110, pp.~193--209, Jan. 2022.

\bibitem{Shuangqing}
S.~Wei, D.~Goeckel, and P.~A. Kelly, ``Convergence of the complex envelope of bandlimited {OFDM} signals,'' {\em {IEEE} Trans. Inf. Theory}, vol.~56, pp.~4893--4904, Oct. 2010.

\bibitem{kay_detection}
S.~M. Kay, ``Fundamentals of statistical signal processing: Detection theory,'' {\em Printice Hall PTR}, vol.~1545, 1998.

\bibitem{levy2008principles}
B.~C. Levy, {\em Principles of signal detection and parameter estimation}.
\newblock Springer Science \& Business Media, 2008.

\bibitem{Kang_Lu}
K.~Lu, G.~Wang, F.~Qu, and Z.~Zhong, ``Signal detection and {BER} analysis for {RF}-powered devices utilizing ambient backscatter,'' in {\em International Conference on Wireless Communications {\&} Signal Processing ({WCSP}), Nanjing, China, Oct. 15-17, 2015}, pp.~1--5, 2015.

\bibitem{5208031}
S.~P. Herath, N.~Rajatheva, and C.~Tellambura, ``Unified approach for energy detection of unknown deterministic signal in cognitive radio over fading channels,'' in {\em 2009 IEEE International Conference on Communications Workshops}, pp.~1--5, 2009.

\bibitem{6987540}
S.~Atapattu, C.~Tellambura, H.~Jiang, and N.~Rajatheva, ``Unified analysis of low-{SNR} energy detection and threshold selection,'' vol.~64, no.~11, pp.~5006--5019, 2015.

\bibitem{Urkowitz}
H.~Urkowitz, ``Energy detection of unknown deterministic signals,'' {\em Proceedings of the IEEE}, vol.~55, no.~4, pp.~523--531, 1967.

\bibitem{Proakis}
J.~G. Proakis, ``Probability, random variables and stochastic processes,'' {\em {IEEE} Trans. Acoust. Speech Signal Process.}, vol.~33, no.~6, p.~1637, 1985.

\bibitem{goodfellow2016deep}
I.~Goodfellow, Y.~Bengio, and A.~Courville, {\em Deep learning}.
\newblock MIT press, 2016.

\bibitem{Kampianakis}
E.~Kampianakis, J.~Kimionis, K.~Tountas, C.~Konstantopoulos, E.~Koutroulis, and A.~Bletsas, ``Backscatter sensor network for extended ranges and low cost with frequency modulators: Application on wireless humidity sensing,'' in {\em SENSORS, 2013 IEEE}, pp.~1--4, 2013.

\bibitem{Vincent_Liu}
V.~Liu, A.~N. Parks, V.~Talla, S.~Gollakota, D.~Wetherall, and J.~R. Smith, ``Ambient backscatter: Wireless communication out of thin air,'' in {\em {ACM} {SIGCOMM} 2013 Conference, {SIGCOMM} 2013, Hong Kong, August 12-16, 2013}, pp.~39--50, {ACM}, 2013.

\bibitem{bishop2006pattern}
C.~M. Bishop and N.~M. Nasrabadi, {\em Pattern recognition and machine learning}, vol.~4.
\newblock Springer, 2006.

\bibitem{Diederik}
D.~P. Kingma and J.~Ba, ``Adam: {A} method for stochastic optimization,'' in {\em 3rd International Conference on Learning Representations ({ICLR}), San Diego, CA, USA, May 7-9}, 2015.

\bibitem{Vikram}
V.~Iyer, V.~Talla, B.~Kellogg, S.~Gollakota, and J.~Smith, ``Inter-technology backscatter: Towards internet connectivity for implanted devices,'' SIGCOMM '16, (New York, NY, USA), p.~356–369, Association for Computing Machinery, 2016.

\bibitem{Guruacharya_2}
S.~Guruacharya, B.~K. Chalise, and B.~Himed, ``On the product of complex gaussians with applications to radar,'' {\em IEEE Signal Processing Letters}, vol.~26, pp.~1536--1540, Oct. 2019.

\end{thebibliography}

\end{document}